\newtheorem{lemma}{Lemma}
\newtheorem{proposition}{Proposition}
\newtheorem{definition}{Definition}
\newtheorem{theorem}{Theorem}
\newtheorem{example}{Example}
\newtheorem{assumption}{Assumption}
\newcommand{\be}{\begin{equation}}
\newcommand{\ee}{\end{equation}}
\newcommand{\ba}{\begin{array}}
\newcommand{\ea}{\end{array}}
\newcommand{\mc}{\mathcal}
\newcommand{\eps}{\varepsilon}
\newcommand{\R}{\mathbb{R}}
\DeclareMathOperator{\argmax}{argmax}
\DeclareMathOperator{\argmin}{argmin}
\newcommand{\0}{\mathbbm{0}}
\newcommand{\1}{\mathbbm{1}}
\newcommand{\A}{\mathcal A}
\renewcommand{\S}{\mc S}
\newcommand{\N}{\mathcal{N}}
\newcommand{\X}{\mathcal{X}}
\newcommand{\V}{\mathcal V}
\newcommand{\ov}{\overline}
\newcommand{\se}{\text{ if }}
\newcommand{\z}{\mathbf{z}}
\newcommand{\y}{\mathbf{y}}
\newcommand{\x}{\mathbf{x}}
\theoremstyle{remark}\newtheorem{remark}{Remark}
\title{Optimal targeting in super-modular games\footnote{This work was partially supported by MIUR grant Dipartimenti di Eccellenza 2018--2022 [CUP: E11G18000350001], the Swedish Research Council, and by the Compagnia di San Paolo.}}
\author{Giacomo Como\thanks{Email: giacomo.como@polito.it. G.~Como is also with the Department of Automatic Control, Lund University, BOX 118, SE-22100, Lund, Sweden. }}
\author{Stephane Durand\thanks{Email: stephane.durand@polito.it}}
\author{Fabio Fagnani\thanks{Email: fabio.fagnani@polito.it}}
\affil{Department of Mathematical Sciences ``G.L.~Lagrange'', Politecnico di Torino, Corso Duca degli Abruzzi 24, 10129, Torino, Italy.}
\begin{document}

%\begin{frontmatter}
%
%\title{Optimal targeting in super-modular games\tnoteref{fnotetitle}}
%\tnotetext[fnotetitle]{This work was partially supported by a MIUR grant Dipartimenti di Eccellenza 2018--2022 [CUP: E11G18000350001], by the Swedish Research Council VR, and by the Compagnia di San Paolo.}
%
%\author{Giacomo Como\fnref{label2}}
%\ead{giacomo.como@polito.it}
%\ead[url]{https://staff.polito.it/giacomo.como/}
%\fntext[label2]{G.~Como is also affiliated with the Department of Automatic Control, Lund University, BOX 118, SE-22100, Lund, Sweden. }
%%\cortext[cor1]{oe}
%\author{Stephane Durand}
%\ead{stephane.durand@polito.it}
%\author{Fabio Fagnani}
%\ead{fabio.fagnani@polito.it}
%\ead[url]{https://staff.polito.it/fabio.fagnani/}
%
%\address{Department of Mathematical Sciences ``G.L.~Lagrange'', Politecnico di Torino, Corso Duca degli Abruzzi 24, 10129, Torino, Italy. }
%
%
%\begin{keyword}
%optimal targeting \sep network intervention  \sep network games \sep super-modular games \sep strategic complements\sep  Nash equilibrium selection%% keywords here, in the form: keyword \sep keyword
%%% PACS codes here, in the form: \PACS code \sep code
%%% MSC codes here, in the form: \MSC code \sep code
%%% or \MSC[2008] code \sep code (2000 is the default)
%\end{keyword}
%
%\end{frontmatter}

%% \linenumbers

%% main text

\maketitle
\begin{abstract}We study an optimal targeting problem for super-modular games with binary actions and finitely many players. The considered problem consists in the selection of a subset of players of minimum size such that, when the actions of these players are forced to a controlled value while the others are left to repeatedly play a best response action, the system will converge to the greatest Nash equilibrium of the game. Our main contributions consist in  showing that the problem is NP-complete and in proposing an efficient iterative algorithm with provable convergence properties for its solution. We discuss in detail the special case of network coordination games and its relation with the notion of cohesiveness. Finally, we show with simulations the strength of our approach with respect to naive heuristics based on classical network centrality measures.
\end{abstract}

\section{Introduction}

In a game with multiple Nash equilibria, what is the minimum number of players to target in order to force the system to move from an original Nash equilibrium $A$ to a desired Nash equilibrium $B$? This paper deals with such a problem for the class of super-modular games with binary actions and where the two Nash equilibria $A$ and $B$ are, respectively, the least and the greatest in the game.  Our contribution is twofold: we show that the problem is NP-complete and we propose the design of an iterative algorithm for an efficient solution. 

The considered problem can be framed in the more general setting of studying minimal intervention strategies needed to drive a multi-agent system governed by agents' myopic utility maximization to a desired configuration. In applications where the goal is to achieve a social optimum, such interventions are often modeled as perturbations of the utility functions that lead to a modification of the Nash equilibria of the game. This viewpoint is natural for instance in analyzing the effect of taxes or subsidies in economic models or prices and tolls in transportation systems. More recently, a similar approach has been proposed in the context of network quadratic games \cite{Galeotti.Golub.Goyal:2017} to model incentive interventions for instance in school and economic systems.

A different viewpoint, that is the one considered in this paper, is that of individuating a subset of nodes (hopefully small) that if suitably controlled will lead the entire system to the desired equilibrium. The minimum cardinality of this set can also be interpreted as a measure of resilience of the system's equilibrium: the larger it is, the more energy is needed by an external intervention to destabilize it. In the context of binary actions $\{0,1\}$ considered in this paper, the control action simply amounts to force the set of chosen players, originally playing action $0$ state, to play action $1$. This well models situations where action $1$ indicates the use of a certain technology or the adoption of a new product and the control action corresponds for instance to a marketing intervention where, at the targeted individuals, a certain item is offered for free. 

Super-modular games have received a great deal of attention in the recent years as the basic way to model strategic complementarity effects \cite{gamesonnetworks}. Its variegate applications include modeling of social and economic behaviors like adoption of a new technology, participation to an event, provision of a public good effort. They are typically endowed with multiple Nash equilibria that admit a Pareto ordering and the problem of the minimal effort needed to push the system from a lower to a greater equilibrium is natural and relevant in all these applicative contexts.

A fundamental example of super-modular games is that of coordination games over networks. The binary coordination game is analyzed in detail in \cite{morris} where the key concept of cohesiveness of a set of players is introduced and then used in order to characterize all Nash equilibria. Moreover, the question if an initial seed of influenced players (that maintain action $1$ in all circumstances) is capable of propagating to the all network is addressed in the same paper and an equivalent characterization of this spreading phenomenon is also expressed in terms of cohesiveness. 

This contagion phenomenon is the content of our analysis in the more general framework of super-modular games. A subset of nodes from which propagation is successful is called a sufficient control sets and our goal is to find such sets of minimum possible cardinality. We notice that the condition proposed in \cite{morris} is computationally quite demanding and in practice it cannot be used directly to solve the optimization problem even for medium size games. Indeed, even to determine if a single set  is a sufficient control set, it requires a number of check growing exponentially in the cardinality of the complement of such set. 

The complementary problem of understanding (for binary coordination games) what is the maximum possible spreading of the state $1$ starting from an initial seed of a given number $k$ of targeted players, was studied in a seminal paper by \cite{Kempe}. While their problem and ours are related, they are independent, in the sense that solving one does not provide a solution of the other. Another point worth stressing is that, in their setting, \cite{Kempe} consider players equipped with random independent activation thresholds and chose to optimize the expected size of the maximum spreading. They prove that such functional is sub-modular and then they design a greedy algorithm for obtaining sub-optimal solutions. The randomness that they introduce is actually crucial in their approach, as the functional considered is not sub-modular for deterministic choices of thresholds. This lack of sub-modularity is actually a key feature of coordination games where the utility functions present a threshold behavior and make it unfeasible to try to approximate our targeting problem by iteratively adding target nodes in a greedy way. 

A targeting intervention problem, related to the one studied in \cite{Kempe}, is considered in \cite{Galeotti.Goyal:2009}. There, the authors consider the problem of a firm that sells a good to a set of individuals organized through a social network. The firm, in order to maximize its profit, chooses a set of individuals on which to concentrate its advertising efforts or other marketing strategies relative to that specific good. The role of the social network is either of propagating information (in a gossip pairwise style) regarding the good so to push other people to buy it, or rather to model a positive externality effect where the utility of an individual to buy that product depends on the number of neighbors already using it. This second instance is particularly related to the problem studied in \cite{Kempe} with the important difference that here authors model the network in a mean field fashion only considering the degree distribution.

A different targeting intervention problem is studied in \cite{Ballesterand.Calvo.Zenou:2006} where authors consider network quadratic games and individuate the $k$ most influential players by studying how the aggregate output decreases when this set of players is removed from the network. 

The general problem of determining the best set of nodes to exert the most effective control in a networked system has recently appeared in other contexts. In \cite{Liu.Slotine.Barabasi:2011, Gao.ea:2014, Pasqualetti.Zampieri.Bullo:2014} this is studied in the context of controllability problems for general linear network systems. In \cite{Yildiz-2013}, \cite{Vassio-2014}, \cite{Grabisch.Rusinowska:2018} authors focus on the problem of the optimal position of stubborn influencers in voter models or in linear opinion dynamics.  

Our main contribution is twofold. First, we prove that the proposed problem is NP-complete, reducing it to the well known $3$-SAT problem. Second, we design an iterative randomized search algorithm with provable convergence properties towards sufficient control sets of minimum cardinality. The core of the algorithm is a time-reversible Markov chain over the family of all sufficient control sets that starts with the full set, moves through all of them in an ergodic way, and concentrates its mass on those of minimum cardinality.\footnote{A preliminary version of the second part of our results for the special case of network coordination games and not containing any complexity analysis were presented at the 21st IFAC World Congress and published in its proceedings \cite{IFAC-2020}.} 

The rest of the paper is organized as follows. In the final part of this section we report some basic notation used throughout the paper. Section \ref{sec:model} is dedicated to the formal introduction of the problem and in particular of the concept of sufficient control sets. Here we introduce the important notion of monotone improvement path (appeared for other purposes in \cite{Drakopoulos.ea:2015,Drakopoulos.ea:2016}) and we give an equivalent (but more operative) characterization of sufficient control sets. Section \ref{sec:hard} is dedicated to the complexity analysis: we show that the problem is equivalent to an instance of the $3$-SAT problem and thus NP complete. In Section \ref{sec:MC} we present and analyze a distributed algorithm to find optimal sufficient control sets and, in Section \ref{sec:numerical-simulations}, we present some simulation results. Finally, a conclusive Section \ref{sec:conclusions} ends the paper.

We conclude this introduction with a few notational conventions to be adopted throughout the paper.  Vectors are indicated in bold-face letters ${\bf x},\, {\bf y}, {\bf z}$. 
We define the binary vectors $\delta_i$: $(\delta_i)_i=1$ and $(\delta_i)_j=0$ for every $j\neq i$.
For a subset $\mc S\subseteq\{1,\dots , n\}$, we put $\1_{\mc S}=\sum_{i\in\mc S}\delta_i$.
Every ${\bf x}$ in $ \{0,1\}^n$ can be written as ${\bf x}=\1_S$ for some $S\subseteq \{1,\dots , n\}$.  
We use the notation $\1$ for the all-$1$ vector.

\section{Problem formulation and basic properties}\label{sec:model}
We consider finite strategic form games with set of players $\V=\{1,\ldots,n\}$ whereby each player $i$ choses her action $x_i$ from a binary set $\A=\{0,1\}$. Let $\mathcal{X}=\mathcal{A}^n$ denote the (strategy) profile space, whose elements $\bf x$ will be referred to as (strategy) profiles. We shall consider the standard partial order on the strategy profile space $\mc X$, given by 
\be\label{eq:partial-ordering}\x\leq\y\qquad\Longleftrightarrow \qquad x_i\leq y_i,\quad \forall i\in\mc V\,.\ee
As customary, given a strategy profile $\x$  in $\mathcal{X}$ and a player $i$, we indicate with $\x_{-i}$ the strategy profile of all players but $i$. 
Each player $i$ is endowed with a utility function $u_i:\X\to\R$, so that  $$u_i(\x)=u_i(x_i, \x_{-i})$$ denotes the utility of player $i$ when she plays action $x_i$ while the rest of the players' strategy profile is $\x_{-i}$. A game will be formally identified by the triple $(\mc V,\mc A,\{u_i\})$.

The best response for a player $i$ in $\V$ is captured by the set-valued function 
$$\mc B_i(\mathbf{x}_{-i})=\argmax_{a \in \mathcal{A}} u_i(a, \x_{-i})\,,$$
while the set of pure strategy Nash equilibria is formally defined by
$$\N=\{{\bf x}\in\X\,|\, x_i\in \mathcal{B}_i(\x_{-i})\,\forall i\in\V\}\,.$$

Throughout the paper, we shall consider games  satisfying the following \emph{increasing difference property}  \cite{Milgrom.Roberts:1990}. 

\begin{assumption}\label{assumption:increasing-differences}
For every player $i$ in $\V$ and every two strategy profiles $\x,\y$ in $\mc X$ such that $\x_{-i}\geq \y_{-i}$, 
\be\label{super-modular} 
u_i(1,\x_{-i}) -u_i(0,\x_{-i})\geq u_i(1,\y_{-i}) -u_i(0,\y_{-i})\,.
\ee
\end{assumption}

Assumption \ref{assumption:increasing-differences} states that the marginal utility of increasing player $i$'s action from $x_i=0$ to $x_i=1$ is a non-decreasing function of the strategy profile $\x_{-i}$ of all the other players. 
For finite games, as is our case, such increasing difference property is equivalent to \emph{super-modularity} \cite{Topkins:1979,Vives:1990,Topkins:1998}. For this reason, we will refer to a game $(\mc V,\mc A,\{u_i\})$ satisfying (\ref{assumption:increasing-differences}) as to a finite super-modular game. In the economic literature, these are also referred to as games of \emph{strategic complements} \cite{Milgrom.Shannon:1994}. 
%A notable example of super-modular games with binary action space is provided by network coordination games that will be formally reviewed in Section \ref{sec:coordination}.

A standard result for super-modular games ensures that their set of pure strategy Nash equilibria is always nonempty and there exist a minimal and a maximal Nash equilibria with respect to the partial order \eqref{eq:partial-ordering}. Throughout the paper, we shall assume  that such minimal and maximal pure strategy Nash equilibria are the all-$0$ profile $\0$ and, respectively, the all-$1$  profile  $\1$. This assumption implies no effective loss of generality since the presence of players that maintain a strict preference for action $0$ or action $1$
independently from the actions played by the other players can be easily integrated  in our framework by suitably modifying the other players' utilities.

In this paper, we study the problem of finding subsets of players $\mc S\subseteq\mc V$  of minimal cardinality for which there exists an improvement path from $\mc S$ to the whole player set $\mc V$. This is formalized by the following definitions.

\begin{definition} For a finite game with binary actions $(\mc V,\mc A,\{u_i\})$, a sequence of strategy profiles $({\bf x}^k)_{k=0,\dots , m}$ is  an \emph{improvement path} from the set $\mc S\subseteq \mc V$ to the set $\mc T\subseteq \mc V$ if 
\begin{enumerate}
\item ${\mathbf x}^0=\1_{\S}$, ${\mathbf x}^m=\1_{\mc T}$
\item for every $k=0,\dots ,m-1$ there exists $i_k$ in $\mathcal{V}\setminus\S$ such that 
\begin{itemize}
\item ${\mathbf x}^{k+1}_{-i_k} ={\mathbf x}_{-i_k}^k$ and ${x}^{k+1}_{i_k} \neq{x}^k_{i_k}$
\item $u_{i_k}({\mathbf x}^{k+1})\geq u_{i_k}({\mathbf x}^{k})$
\end{itemize}
\end{enumerate}
%When $\mc T=\mc V$, we call it a \emph{complete} improvement path.
\end{definition}

\begin{definition}[Sufficient control set]\label{def:sufficient} For a finite game with binary actions $(\mc V,\mc A,\{u_i\})$, 
\begin{itemize}
\item $\mc S\subseteq \mc V$ is a \emph{sufficient control set} if there exists an improvement path from $\mc S$ to $\mc V$.
%\item A sufficient control set is \emph{minimal} if none of its strict subsets is a sufficient control set. 
\item A sufficient control set $\mc S\subseteq \mc V$ is \emph{optimal} if there exists no sufficient control set of strictly smaller cardinality.
\end{itemize}
\end{definition}
Notice that sufficient control sets always exist, as the whole set of players $\mc V$ trivially is a sufficient control set. Our objective is to find optimal sufficient control sets.

A key fact is that, in dealing with the concept of sufficient control set, it is not restrictive to consider exclusively improvement paths where all action changes are from $0$ to $1$. Such improvement paths are formally defined below.

\begin{definition}[Monotone Improvement path]\label{def:crusade} %Let $\S\subseteq \mathcal{V}$. 
For a finite game with binary actions $(\mc V,\mc A,\{u_i\})$, an improvement path  $({\bf x}^k)_{k=0,\dots , m}$ from the set $\mc S\subseteq \mc V$ to the set $\mc T\subseteq \mc V$ is called \emph{monotone } if there exists a sequence of distinct players $i_k$ in $\mc T\setminus\mc S$ for $k=0,\dots , m-1$ such that ${\mathbf x}^{k+1} ={\mathbf x}^k+\delta_{i_k}$ for $k=0,\dots , m-1$. 
\end{definition}

\begin{remark}\label{remark:monotone} %All nodes $i_1,\dots ,1_m$ appearing in 2. are necessarily distinct otherwise the condition ${\bf x}^k$ in $\{0,1\}^n$ for all $k$ would not be true. 
Notice that a monotone improvement path from $\mc S$ to $\mc T$ is completely specified by the sequence of players $i_k$ in $\mc T\setminus\mc S$, $k=1,\ldots,m$, which are sequentially changing their actions from $0$ to $1$. Observe that 
$\mathcal{T}\setminus\S=\{i_1,\dots ,i_m\}$ and thus  the path length $m=|\mathcal{T}\setminus\S|$ coincides with the difference between the cardinality of the arrival set $\mc T$ and the one of the departure set $\mc S$. 
\end{remark}

The following result formalizes our previous claim. 

\begin{lemma}\label{lemma:valid-sufficient} In a finite super-modular game with binary actions $(\mc V, \mc A, \{u_i\})$, $\mc S\subseteq \mc V$ is a {sufficient control set} if and only if there exists a monotone improvement path from $\mc S$ to $\mc V$.
\end{lemma}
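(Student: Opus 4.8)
The plan is to prove both implications, with the forward one immediate and the reverse one carrying all the content. For the \emph{if} direction, I would simply observe that a monotone improvement path from $\mc S$ to $\mc V$ is in particular an improvement path in the sense of the first definition: each step flips a single coordinate $i_k\in\mc V\setminus\mc S$ from $0$ to $1$ and is weakly improving, so conditions (1)--(2) hold and by Definition~\ref{def:sufficient} the set $\mc S$ is a sufficient control set. Nothing more is needed here.

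For the \emph{only if} direction I would not attempt to edit the given (possibly non-monotone) improvement path directly; instead I would construct a canonical monotone path by a greedy cascade and then prove it reaches $\1$. Concretely, starting from $\z^0=\1_{\mc S}$, I repeatedly pick any player $i$ currently playing $0$ for which flipping to $1$ is weakly improving, i.e. $u_i(1,\z_{-i})\geq u_i(0,\z_{-i})$, and set its action to $1$. Since each player is flipped at most once and $\mc V$ is finite, the cascade terminates at some profile $\z^*=\1_{\mc T^*}$ with $\mc S\subseteq\mc T^*$, and by construction the sequence of flips is a monotone improvement path from $\mc S$ to $\mc T^*$. It then remains only to show $\mc T^*=\mc V$.

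The heart of the argument is that the cascade cannot get stuck before reaching $\1$, and this is where super-modularity enters. Suppose toward a contradiction that $\mc T^*\neq\mc V$; then, by termination, every player $i\in\mc V\setminus\mc T^*$ strictly prefers $0$ at the terminal profile, i.e. $u_i(1,\z^*_{-i})<u_i(0,\z^*_{-i})$. Now I invoke the hypothesis that $\mc S$ is a sufficient control set: there is an improvement path $(\x^k)_{k=0,\dots,m}$ with $\x^0=\1_{\mc S}$ and $\x^m=\1$. Since $\mc S\subseteq\mc T^*$, all players of $\mc V\setminus\mc T^*$ start at $0$ in $\x^0$ and are at $1$ in $\x^m$, so at least one of them changes action along the path; let $k$ be the first index at which this happens, say for player $i_k\in\mc V\setminus\mc T^*$. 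By minimality of $k$, every player of $\mc V\setminus\mc T^*$ is still at $0$ in $\x^k$, hence this change is a flip $0\to1$ and moreover $\x^k\leq\1_{\mc T^*}=\z^*$, because coordinates in $\mc T^*$ are at most $1$ while those outside equal $0$.

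The final step is the one I expect to be the crux: combining the improvement inequality at step $k$ with the increasing difference property at exactly the right profile. The flip being improving gives $u_{i_k}(1,\x^k_{-i_k})-u_{i_k}(0,\x^k_{-i_k})\geq 0$, and since $\z^*_{-i_k}\geq\x^k_{-i_k}$, Assumption~\ref{assumption:increasing-differences} yields $u_{i_k}(1,\z^*_{-i_k})-u_{i_k}(0,\z^*_{-i_k})\geq u_{i_k}(1,\x^k_{-i_k})-u_{i_k}(0,\x^k_{-i_k})\geq 0$. This says that player $i_k$, which plays $0$ in $\z^*$, would weakly prefer to flip to $1$, contradicting the termination of the cascade. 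Hence $\mc T^*=\mc V$ and the constructed monotone path runs from $\mc S$ to $\mc V$. The only delicate points to get right are that $k$ truly corresponds to a $0\to1$ flip (via minimality of $k$) and the comparison $\x^k\leq\z^*$ that licenses applying increasing differences; the remainder is bookkeeping.
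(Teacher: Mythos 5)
Your proof is correct, but it takes a genuinely different route from the paper's. The paper argues directly and constructively: it takes the given (possibly non-monotone) improvement path $(\y^k)$, records for each player $i\in\mc V\setminus\mc S$ the first time $k(i)$ she flips from $0$ to $1$, and performs those flips in order of increasing $k(i)$; the resulting profiles $\x^h$ satisfy $\x^{h-1}\ge\y^{k(i_h)-1}$, so Assumption~\ref{assumption:increasing-differences} transfers the improvement inequality from the original path to the reordered one, yielding the monotone path explicitly. You instead run an arbitrary maximal greedy cascade of weakly improving $0\to1$ flips from $\1_{\mc S}$ and show by contradiction that it cannot stall before $\1$: locating the first player of $\mc V\setminus\mc T^*$ who flips along the original path, verifying $\x^k\le\z^*$, and applying increasing differences to contradict termination. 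Both arguments hinge on the same use of Assumption~\ref{assumption:increasing-differences} (comparing a dominating profile against a profile of the original path), and your bookkeeping on the index $k$ and on the comparison $\x^k\le\z^*$ is sound. What your version buys is a strictly stronger conclusion: \emph{every} maximal cascade from $\1_{\mc S}$ reaches $\1$ when $\mc S$ is sufficient, which is precisely the fact the paper needs again later for the polynomial-time verification in Lemma~\ref{lemma:NP}; the cost is an indirect contradiction argument in place of the paper's explicit construction of the monotone path.
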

\begin{proof}
Clearly, if there exists a monotone improvement path from $\mc S$ to $\mc V$, then $\mc S$ is a sufficient control set. 

Conversely, if $\S$ is a sufficient control set, then there exists a (not necessarily monotone) improvement path $({\bf y}^k)_{k=0,\dots , T}$
%$({\bf y}^0,{\bf y}^1,\ldots, {\bf y}^T)$ 
in $\mathcal{X}$ from $\mc \S$ to $\V$.
For every player $i$ in $\V\setminus\S$, define
$$k(i)=\min\{k=1,\dots , T\;|\; {\bf y}^{k}={\bf y}^{k-1}+\delta_{i}\}$$
that is the first time that player $i$ changes her action from $0$ to $1$ along the path $({\bf y}^k)_{k=0,\dots , T}$. Now, let $m=n-|\S|$ and order the players in $\V \setminus\S$ as $i_1,\dots , i_m$ in such a way that $k({i_1})<k({i_2})<\cdots <k({i_m})$. 
%Consider the monotone improvement path ${\bf x}^h$ associated with the sequence of points $(i_h)$, namely,
Then, for every $h=0,1,\ldots,m$, define
$${\bf x}^h=\1_{\S}+\sum_{j=1}^{h}\delta_{i_j}$$
and notice that ${\bf x}^{h-1}\geq {\bf y}^{k(i_h)-1}$. Using the increasing difference property we now obtain that 
$$u_{i_h}({\bf x}^{h})- u_{i_h}({\bf x}^{h-1})\geq 
u_{i_h}(\y^{k(i_h)})- u_{i_h}(\y^{k(i_h)-1})\geq 0\,,$$
for every $h=1,\ldots,m$. 
This shows that $({\bf x}^k)_{k=0,\dots , m}$
%${\bf x}^0,{\bf x}^1,\ldots,{\bf x}^m$ 
is an improvement path from $\S$ to $\mc V$.  
By construction, this improvement path is also monotone, thus proving the claim. 
\end{proof}

This new characterization of sufficient control sets, allows for proving the following intuitive fact.
\begin{proposition}[monotonicity for inclusion]\label{prop:superset}
\label{mono}
In a finite super-modular game with binary actions $(\mc V, \mc A, \{u_i\})$, if $\mc S\subseteq \mc V$ is a {sufficient control set} then every  $\mc T\subseteq\mc V$ such that $\mc S\subseteq\mc T$ is also a {sufficient} control set.
\end{proposition}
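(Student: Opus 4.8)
The plan is to invoke Lemma~\ref{lemma:valid-sufficient} to turn the abstract hypothesis into a concrete object, and then to transfer it from $\mc S$ to $\mc T$. Since $\mc S$ is a sufficient control set, there exists a monotone improvement path from $\mc S$ to $\mc V$, specified (per Remark~\ref{remark:monotone}) by an ordering $i_1,\dots,i_m$ of the players in $\mc V\setminus\mc S$, with $m=n-|\mc S|$, and the associated profiles $\x^h=\1_{\mc S}+\sum_{j=1}^h\delta_{i_j}$ satisfying the improvement inequalities $u_{i_h}(\x^h)\geq u_{i_h}(\x^{h-1})$ for every $h$. The goal is then to exhibit a monotone improvement path from $\mc T$ to $\mc V$, after which a second appeal to Lemma~\ref{lemma:valid-sufficient} yields that $\mc T$ is a sufficient control set.

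The construction I would use is to follow the \emph{same} activation order, simply deleting the players that already belong to $\mc T$. Concretely, let $i_{j_1},\dots,i_{j_r}$ with $j_1<\cdots<j_r$ be the subsequence of $i_1,\dots,i_m$ consisting of those players lying in $\mc V\setminus\mc T$, and set $\z^l=\1_{\mc T}+\sum_{s=1}^l\delta_{i_{j_s}}$ for $l=0,\dots,r$. Since $\mc S\subseteq\mc T$ implies $\mc V\setminus\mc T\subseteq\mc V\setminus\mc S$, the retained players are exactly those of $\mc V\setminus\mc T$, so $r=n-|\mc T|$, $\z^0=\1_{\mc T}$, $\z^r=\1$, and each step activates one new player from $\mc T\setminus\mc T$'s complement; it then remains only to verify the improvement inequality at every step.

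The crux of the argument is a domination claim: at the moment player $i_{j_l}$ is activated along the new path, the current profile dominates the profile that preceded the activation of the same player along the old path, i.e. $\x^{j_l-1}\leq\z^{l-1}$. This is a pure bookkeeping fact about active sets: the active set of $\x^{j_l-1}$ is $\mc S\cup\{i_1,\dots,i_{j_l-1}\}$, and every such $i_j$ with $j<j_l$ either lies in $\mc T$ or is one of the retained players $i_{j_s}$ with $s<l$, hence in either case is active in $\z^{l-1}$; together with $\mc S\subseteq\mc T$ this gives the inclusion of active sets and therefore the profile inequality. Restricting to the coordinates other than $i_{j_l}$ yields $\x^{j_l-1}_{-i_{j_l}}\leq\z^{l-1}_{-i_{j_l}}$, and Assumption~\ref{assumption:increasing-differences} then gives
\be
u_{i_{j_l}}(1,\z^{l-1}_{-i_{j_l}})-u_{i_{j_l}}(0,\z^{l-1}_{-i_{j_l}})\geq u_{i_{j_l}}(1,\x^{j_l-1}_{-i_{j_l}})-u_{i_{j_l}}(0,\x^{j_l-1}_{-i_{j_l}})\geq 0\,,
\ee
where the last inequality is precisely the improvement inequality for step $j_l$ of the original path. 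Thus each step of $(\z^l)$ is a valid improvement, giving the desired monotone improvement path from $\mc T$ to $\mc V$.

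I expect the only genuine obstacle to be the careful indexing in the domination claim, namely keeping the two index sets $\{i_1,\dots,i_{j_l-1}\}$ and $\{i_{j_1},\dots,i_{j_{l-1}}\}$ straight and confirming the inclusion of active players; once $\x^{j_l-1}\leq\z^{l-1}$ is in hand, the increasing difference property does the rest automatically, so no real analytic difficulty arises beyond this combinatorial verification.
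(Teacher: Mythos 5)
Your proof is correct and follows essentially the same route as the paper's: the paper also extracts the subsequence of activated players lying outside $\mc T$ and defines the new profiles as $\y^h=\max\{\1_{\mc T},\x^{k_h}\}$, which coincides with your $\z^l$, and then applies the increasing difference property to the same domination inequality $\x^{j_l-1}\le\z^{l-1}$. Your version merely spells out the active-set bookkeeping more explicitly; no substantive difference.
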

\begin{proof}
Assume that $\S$ is a {sufficient} control set and let $\mc T\supseteq \S$.
Because of Lemma \ref{lemma:valid-sufficient}, there exists a monotone improvement path $({\bf x}^k)_{k=0,\dots , m}$ %with $m=n-|\S|$
from $\S$ to $\mc V$. Consider the associated sequence of players $(i_k)$ for $k=1,\dots , m$ such that
${\bf x}^{k+1}- {\bf x}^k=\delta_{i_k}$ for each $k$.
Consider now the subsequence of points $i_{k_1}, i_{k_2}, \dots , i_{k_{m'}}$ that are in $\mathcal{V}\setminus\mc T$ and put $\y^h=\max\{\1_{\mc T}, \x^{k_h}\}$ for $h=1,\dots , m'$. By construction, we have that $\y^{h}\geq \x^{k_{h+1}-1}$. By the increasing difference property (\ref{super-modular}) and the fact that $({\bf x}^k)_{k=0,\dots , m}$ is a monotone improvement path from $\S$ it follows that, for every $h$, putting $i=i_{k_{h+1}}$,
$$\begin{array}{rcl}u_i(\y^{h+1})-u_i(\y^{h})&=&u_i(1, \y^{h}_{-i})-u_i(0, \y^{h}_{-i})\\[5pt]
&\geq& u_i(1, \x^{k_{h+1}-1}_{-i})-u_i(0, \x^{k_{h+1}-1}_{-i})\\[5pt]
&=&u_i(\x^{k_{h+1}})-u_i(\x^{k_{h+1}-1})\\[5pt]
&\geq& 0\end{array}
$$
This says that $({\bf y}^k)_{k=0,\dots , m'}$ is a monotone improvement path from $\mc T$ to $\mc V$.
\end{proof}

\begin{remark} The notion of sufficient control set introduced in Definition \ref{def:sufficient} can be reinterpreted in terms of the asynchronous best response dynamics. Given a subset $\mc S\subseteq \mc V$, consider the Markov chain $X^t$ on the strategy profile space $\mc X$ %such that $X^0=\1_{\mc S}$ 
whose transitions are described as follows. At every discrete time, a player, among those in $\mc V\setminus \mc S$, is chosen uniformly at random and  updates her played action choosing uniformly at random among the actions of her current best response to the other players' strategy profile. Notice that the existence of an improvement path from $\mc S$ to $\mc V$ is equivalent to say that, for every initial state $X^0$ such that $X^0_i=1$ for all $i$ in $\mc S$, the Markov chain $X^t$ will reach the all-$1$ profile $\1$ in finite time with positive probability. 

Actually,  more is true. Consider any strategy profile $\bf x$ in $\mc X$ such that $X^0_i=1$ for all $i$ in $\mc S$, equivalently such that $\bf x=\1_{\mc S'}$ for some superset $\mc S'\supseteq\mc S$.  If $\mc S$ is a sufficient control set, it follows from Proposition \ref{prop:superset} that also $\mc S'$ is a sufficient. This implies that there exists a monotone improvement path from $\mc S'$ to $\mc V$ and thus $X^t$ will also reach $\1$ from $\x$ in finite time with positive probability. If the all-1 strategy profile $\1$ is a strict Nash equilibrium (in the sense that all players have, in that profile, a best response consisting of the singleton $1$) then this argument proves that $\mc S\subseteq \mc V$ is a sufficient control set if and only if the corresponding Markov chain $X^t$ is absorbed in $\1$ in finite time with probability one. In the more general case, if there are players for which $0$ and $1$ are always indifferent independently from the behavior of the other players, then the condition on the Markov chain is replaced by 
the existence of a set of profiles containing $\1$ on which the Markov chain $X^t$ gets trapped in finite time with probability one and within such set it moves ergodically. 
\end{remark}

\section{Optimal targeting in network coordination games}\label{sec:coordination}
A notable example of super-modular games with binary actions is that of network coordination games. In this section, after reviewing  this class of games, we study the optimal targeting problem for two special instances. We first study coordination games on arbitrary undirected networks where the players have homogeneous thresholds characterizing their best responses and we highlight, for this case, the connection of our problem with the notion of cohesiveness \cite{morris}. The second case we consider is that of coordination games on a complete graph with heterogeneous thresholds for which, we show that the optimal targeting problem admits a relatively simple analytical solution (c.f.~\cite{Granovetter,Rossi.ea:2019}).

Let $\mc G=(\mc V,\mc E,W)$ be a finite weighted directed graph, whereby $\mc V$ is the set of nodes, $\mc E\subseteq\mc V\times\mc V$ is the set of directed links, and $W$ in $\R_+^{\mc V\times\mc V}$ is the weight matrix, such that $W_{ij}>0$ if and only if there is a link $(i,j)$ in $\mc E$ directed from its tail node $i$ to its head node $j$. A positive entry $W_{ij}$ of the weight matrix $W$ represents the weight of the link $(i,j)$. Let $w_i=\sum_{j\ne i}W_{ij}$ denote the out-degree of a node $i$ in $\mc V$. We shall assume that $\mc G$ contains no self-loops, equivalently, that the diagonal elements of the weight matrix $W$ are all zero, and no sinks, i.e., that $w_i>0$ for every $i$ in $\mc V$. We shall refer to the graph $\mc G$ as simple if $W_{ij}=W_{ji}$ in $\{0,1\}$ (in this case $W$ is completely determined by $\mc E$).

A \emph{network coordination game} on a graph $\mc G=(\mc V,\mc E,W)$ is a game $(\mc V, \mc A, \{u_i\})$ 
%whose players are identified with the nodes in $\mc V$, have 
with binary action set $\mc A=\{0,1\}$ and utilities  
\be\label{utilities-coordination}u_i(x)=\sum_{j\ne i}W_{ij}\left((1-x_i)(1-x_j)+x_ix_j\right)+c_i x_i\,,\ee
%\be\label{utilities-coordination}u_i(x)=\sum_{j\ne i}W_{ij}\left(2x_ix_j-x_i-x_j\right)+c_i x_i\,,\ee
where the constant $c_i$ in $[-w_i,w_i]$ models a possible bias of player $i$ towards action $0$ (if $c_i<0$) or action $1$ (if $c_i>0$). In fact, the best response functions are given by 
\be\label{bestrespons-coord}\mc B_i(x_{-i})=\left\{\ba{lcl}\{0\}&\se&\frac1{w_i}\sum_{j\ne i}W_{ij}x_j<\theta_i\\[7pt]
\{0,1\}&\se&\frac1{w_i}\sum_{j\ne i}W_{ij}x_j=\theta_i\\[7pt]
\{1\}&\se&\frac1{w_i}\sum_{j\ne i}W_{ij}x_j>\theta_i\ea \right. \ee
where
\be\label{threshold-general}\theta_i=\frac{w_i-c_i}{2w_i}\ee
is the threshold of player $i$ in $\mc V$.  In the special case when the graph is simple and $c_i=0$ (so that the threshold is $\theta_i=1/2$) for every player $i$ in $\mc V$, this is also known as the \emph{majority game}.

\subsection{Homogeneous network coordination games}

In this subsection, we focus on the special case when the players all have the same threshold $\theta_i=\theta$ in $[0,1]$. Sufficient control sets in this case can be equivalently formulated in terms of the graph-theoretic notion of cohesiveness introduced in \cite{morris}. Specifically, a subset of nodes $\mc S\subseteq \mc V$ is called $\alpha$-\emph{cohesive} in a graph $\mc G$ if 
\be\label{cohesiveness}\sum_{j\in\mc S}W_{ij}\ge\alpha w_i\,,\qquad \forall i\in\mc S\,,\ee
%for every node $i\in\mc C$, it holds that
%$$|\{j\in\mc S\,|\, (i,j)\in\mc E\}|\geq q|N_i|$$
For a simple graph, the above means that every node in $\mc S$ has at least a fraction $\alpha$ of its neighbors within $\mc S$ (equivalently, at most a fraction $1-\alpha$ outside $\mc S$).  Considerations in \cite{morris} and in \cite{gamesonnetworks} yield the following characterization of sufficient control sets. Define a subset $\mc S\subseteq \mc V$ \emph{uniformly no more than $\theta$-cohesive} if no subset $\mc S'\subseteq \mc S$ is $\theta'$-cohesive for some $\theta'>\theta$. The following is a consequence of this definition and explicitly proven in \cite{gamesonnetworks} (see Proposition 4 therein).
\begin{proposition}\label{prop:morris} $\mc S\subseteq \mc V$ is a sufficient control set for the network coordination game where all players have threshold $\theta$ if and only if $\mc V\setminus\mc S$ is uniformly no more than $(1-\theta)$-cohesive.
\end{proposition}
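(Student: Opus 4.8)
The plan is to reformulate the monotone improvement path as a threshold-based contagion process and then match the termination condition of this process with the cohesiveness condition in the statement. First I would record the elementary translation of the improvement step: along a monotone improvement path, a player $i$ currently playing $0$ may be switched to $1$ exactly when action $1$ is a best response, which by \eqref{bestrespons-coord}--\eqref{threshold-general} (using that at the threshold both actions are best responses, so the switch preserves utility) is equivalent to $\sum_{j\ne i} W_{ij} x_j \ge \theta w_i$. Writing $\mc R$ for the set of players currently at $1$ and using $w_i=\sum_{j\ne i}W_{ij}$, this activation condition for $i\in\mc V\setminus\mc R$ reads $\sum_{j\in\mc R}W_{ij}\ge\theta w_i$, equivalently $\sum_{j\in\mc V\setminus\mc R}W_{ij}\le(1-\theta)w_i$. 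This identity linking the active and inactive shares through $\sum_{j\in\mc R}W_{ij}+\sum_{j\in\mc V\setminus\mc R}W_{ij}=w_i$ is the bridge to cohesiveness.

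For the ``if'' direction, assume $\mc V\setminus\mc S$ is uniformly no more than $(1-\theta)$-cohesive. I would run the greedy contagion: set $\mc R_0=\mc S$ and, as long as some inactive player satisfies the activation condition, activate one such player, obtaining an increasing sequence that stabilizes at some $\mc R$ by finiteness. If $\mc R\ne\mc V$, then $\mc S''=\mc V\setminus\mc R$ is nonempty and, since no further activation is possible, every $i\in\mc S''$ satisfies $\sum_{j\in\mc R}W_{ij}<\theta w_i$, hence $\sum_{j\in\mc S''}W_{ij}=w_i-\sum_{j\in\mc R}W_{ij}>(1-\theta)w_i$. Thus $\mc S''$ is $\theta'$-cohesive with $\theta'=\min_{i\in\mc S''}w_i^{-1}\sum_{j\in\mc S''}W_{ij}>1-\theta$, and $\mc S''\subseteq\mc V\setminus\mc S$, contradicting the hypothesis. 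Therefore $\mc R=\mc V$, and the recorded sequence of activations is a monotone improvement path from $\mc S$ to $\mc V$; by Lemma \ref{lemma:valid-sufficient}, $\mc S$ is a sufficient control set.

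For the ``only if'' direction I would argue by contraposition: assume $\mc V\setminus\mc S$ is not uniformly no more than $(1-\theta)$-cohesive, so there is a nonempty $\mc S''\subseteq\mc V\setminus\mc S$ that is $\theta'$-cohesive with $\theta'>1-\theta$, i.e. $\sum_{j\in\mc S''}W_{ij}>(1-\theta)w_i$ for every $i\in\mc S''$. I claim that no player of $\mc S''$ is ever activated along any monotone improvement path from $\mc S$. Indeed, suppose $i\in\mc S''$ is the first such player to flip; just before that flip the active set $\mc R$ is disjoint from $\mc S''$ (since $\mc S\cap\mc S''=\emptyset$ and $i$ is the first of $\mc S''$), so $\sum_{j\in\mc R}W_{ij}\le w_i-\sum_{j\in\mc S''}W_{ij}<\theta w_i$, violating the activation condition. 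Hence $\mc S''$ remains inactive, no path can reach $\mc V$, and $\mc S$ is not a sufficient control set.

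The only delicate point is the boundary bookkeeping between strict and non-strict inequalities: the activation condition is non-strict ($\ge\theta$), whereas the obstruction requires a strict cohesiveness excess ($\theta'>1-\theta$), and these must be aligned correctly through the complementation identity above so that the ``stuck'' set produced in the ``if'' direction is genuinely $\theta'$-cohesive for some $\theta'$ \emph{strictly} exceeding $1-\theta$, and conversely that a strictly super-$(1-\theta)$-cohesive set strictly blocks activation. Beyond this, I would only need to confirm that the greedy order of activations indeed constitutes a valid monotone improvement path in the sense of Definition \ref{def:crusade}; the remainder is a direct dictionary between the contagion dynamics and the cohesiveness definition.
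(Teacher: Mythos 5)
Your argument is correct and complete. Note that the paper itself gives no proof of this proposition --- it is stated as a consequence of the definition of uniform cohesiveness and attributed to Proposition 4 of \cite{gamesonnetworks} --- so there is nothing in the text to compare against line by line; what you have written is precisely the standard contagion/cohesiveness duality that the cited reference carries out. Your two directions are sound: the identity $\sum_{j\in\mc R}W_{ij}+\sum_{j\in\mc V\setminus\mc R}W_{ij}=w_i$ correctly translates the non-strict activation condition $\sum_{j\in\mc R}W_{ij}\ge\theta w_i$ into $\sum_{j\in\mc V\setminus\mc R}W_{ij}\le(1-\theta)w_i$, the ``stuck'' set of the greedy contagion satisfies the \emph{strict} inequality $\sum_{j\in\mc S''}W_{ij}>(1-\theta)w_i$ for each of its members, and taking the minimum over the finite nonempty set $\mc S''$ yields a single $\theta'$ strictly above $1-\theta$, exactly as the definition of cohesiveness requires; conversely a $\theta'$-cohesive blocking set with $\theta'>1-\theta$ strictly violates activation for its first would-be member. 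The appeal to Lemma \ref{lemma:valid-sufficient} to reduce to monotone improvement paths is the right way to make the greedy contagion equivalent to sufficiency of the control set. The only implicit convention worth stating is that the subsets $\mc S'$ in the definition of ``uniformly no more than $\theta$-cohesive'' are understood to be nonempty (the empty set is vacuously $\theta'$-cohesive for every $\theta'$); your $\mc S''$ is nonempty in both directions, so the argument goes through.
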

This reformulation of the concept of sufficient control set is of limited interest from the computational point of view. Indeed, checking that the set $\mc V\setminus\mc S$ is uniformly no more than $(1-\theta)$-cohesive involves an analysis of all possible subsets of $\mc V\setminus\mc S$. Nevertheless, this characterization can be used to analyze specific cases. 

Below we present examples of sufficient control sets for the special case of the majority game for specific simple connected graphs.

\begin{example} In this example the game we are considering is always the majority game (namely the network coordination game where all players have threshold $1/2$) on a simple and connected graph $\mc G=(\mc V,\mc E, W)$.
\begin{itemize}

\item  Let $\mc G$ be the complete graph with $n$ nodes. Then, all subsets consisting of $\lfloor n/2\rfloor$ nodes are optimal sufficient control sets. This is because every subset of $\lceil n/2\rceil$ or fewer nodes is not $\theta$-cohesive for any $\theta>1/2$. This says that every subset of $\lceil n/2\rceil$ nodes is uniformly no more than $1/2$-cohesive and result follows from Proposition \ref{prop:morris}.
Optimality follows directly from the fact that smaller subsets are never sufficient control sets.
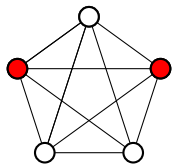
\begin{figure}
\centering
\begin{tikzpicture}
\foreach \i in {90,162,...,450}
{
	\draw (\i+144:1)--(\i:1) -- (\i+72:1);
}
\foreach \i in {90,162,...,450}
{
	\draw[white,fill] (\i:1) circle (.13);
	\draw[thick] (\i:1) circle (.13);
	
}
	\draw[red,fill] (162:1) circle (.13);
	\draw[thick] (162:1) circle (.13);

	\draw[red,fill] (18:1) circle (.13);
	\draw[thick] (18:1) circle (.13);

\end{tikzpicture}
\caption{An optimal sufficient control set for the complete graph}
\label{fig-clique}
\end{figure}

\item Let $\mc G$ be a simple connected graph where every node has degree at most $2$. Then, every set consisting of a single node is a sufficient control set (and is automatically optimal).  To see this, notice that every strict subset of players $\mc S \subsetneq\mc V$ must possess a node $i$ in $\mc S$ with $W_{ij}>0$ for some $j$ in $\mc V\setminus\mc S$ (otherwise the graph would not be connected). This implies that the set $\mc S$ cannot be $\theta$-cohesive for any $\theta>1/2$. We can then conclude as in the previous item. 
An instance is depicted in Figure \ref{fig-degreetwo}.

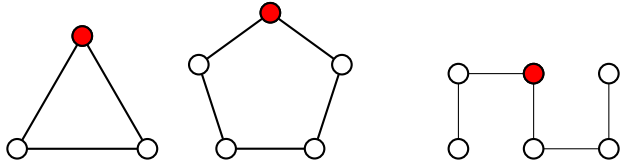
\begin{figure}
\centering
\begin{tikzpicture}
\begin{scope}{yshift=.86 cm}
	\draw[thick] (90:1)--(330:1)--(210:1)--(90:1);

\foreach \i in {330,90,210}
{
	\draw[white,fill] (\i:1) circle (.13);
	\draw[thick] (\i:1) circle (.13);
}
\draw[red,fill] (90:1) circle (.13);
\draw[thick] (90:1) circle (.13);
\end{scope}
\begin{scope}[xshift=2.5 cm, yshift=.31cm]

\foreach \i in {90,162,234,306,18}
{
	\draw[thick] (\i:1)--(\i+72:1);
}
\foreach \i in {90,162,234,306,18}
{
	\draw[white,fill] (\i:1) circle (.13);
	\draw[thick] (\i:1) circle (.13);
}
\draw[red,fill] (90:1) circle (.13);
\draw[thick] (90:1) circle (.13);

\end{scope}
\begin{scope}[xshift=5 cm, yshift=-.5 cm]
\draw (0,0)--(0,1)--(1,1)--(1,0)--(2,0)--(2,1);
\foreach \i in {0,1,2}
{
\foreach \j in {0,1}
{
	\draw[white,fill] (\i,\j) circle (.13);
	\draw[thick] (\i,\j) circle (.13);
}
}
\draw[red,fill] (1,1) circle (.13);
\draw[thick] (1,1) circle (.13);

\end{scope}

\end{tikzpicture}
\caption{Optimal sufficient control sets for graphs with nodes of degree at most $2$.}
\label{fig-degreetwo}
\end{figure}

\item Let $\mc G$ be a tree. Then, the set of the leaf nodes is always a sufficient control set. Indeed let $\mc S$ be any subset of the nodes not containing leaves and consider a path (a walk with no repeated nodes) of maximum length all consisting of nodes in $\mc S$, say $(i_1,\dots , i_l)$. Notice that $i_1$ can not have other neighbors in $\mc S$ otherwise the path could be extendable. On the other hand, since $i_1$ is not a leaf in the tree, it must have degree at least $2$, namely, at least one neighbor outside of $\mc S$. This implies that $\mc S$ can not be $\theta$-cohesive for $\theta>1/2$. We conclude using again Proposition \ref{prop:morris}.
In general, such sets are not optimal. Indeed, the argument above shows that also the set of nodes that are neighbors of the leaves is a sufficient control set, typically of smaller cardinality than the set of leaves. 
An example is reported in Figure  \ref{fig-tree}.

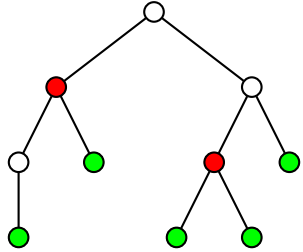
\begin{figure}
\centering
\begin{tikzpicture}

\draw[thick] (-1.3,3) -- (0,4);
\draw[thick] (1.3,3) -- (0,4);
\draw[thick] (-1.8,2) -- (-1.3,3);
\draw[thick] (-1.8,1) -- (-1.8,2);
\draw[thick] (-.8,2) --(-1.3,3);
\draw[thick] (.8,2) --(1.3,3);
\draw[thick] (1.8,2) --(1.3,3);
\draw[thick] (0.3,1) --(.8,2);
\draw[thick] (1.3,1) --(.8,2);

\draw[fill,white] (0,4) circle (.13);
\draw[thick] (0,4) circle (.13);

\draw[fill,red] (-1.3,3) circle (.13);
\draw[thick] (-1.3,3) circle (.13);

\draw[fill, white] (1.3,3) circle (.13);
\draw[thick] (1.3,3) circle (.13);

\draw[fill,white] (-1.8,2) circle (.13);
\draw[thick] (-1.8,2) circle (.13);

\draw[fill,green] (-1.8,1) circle (.13);
\draw[thick] (-1.8,1) circle (.13);

\draw[fill,green] (-.8,2) circle (.13);
\draw[thick] (-.8,2) circle (.13);

\draw[fill,red] (.8,2) circle (.13);
\draw[thick] (.8,2) circle (.13);

\draw[fill,green] (1.8,2) circle (.13);
\draw[thick] (1.8,2) circle (.13);

\draw[fill,green] (0.3,1) circle (.13);
\draw[thick] (0.3,1) circle (.13);

\draw[fill,green] (1.3,1) circle (.13);
\draw[thick] (1.3,1) circle (.13);

\end{tikzpicture}
\caption{Two examples of sufficient control sets for a tree: the one consisting of the leaves in green and the one consisting of the neighbors of the leaves in red. This second one is optimal.}
\label{fig-tree}
\end{figure}

\item Let $\mc G$ be the $d$-dimensional grid graph having node set $\mc V$ and link set $\mc E$ respectively given by 
$$\mc V=\{0,\dots , k-1\}^d\,,\qquad \mc E=\left\{({\bf a},{\bf b})\in\mc V\times\mc V\,|\, \sum\nolimits_{h=1}^k|a_h-b_h|=1\right\}\,.$$
Put $\mc S_l=\{(a_1,\dots , a_d)\in\mc V\;|\; \sum a_i=l\}$. We claim that $\mc S_{k-1}$ is a sufficient control set. To see this, notice that any ${\bf a}$ in $ \mc S_l$ has exactly $d$ neighbors in $\mc S_{l+1}$ if $l<k-1$. Similarly, any ${\bf a}$ in $ \mc S_l$ has exactly $d$ neighbors in $\mc S_{l-1}$ if $l>k-1$. Considering that the degree of every node is at most $2d$ in $\mc G$, a simple induction argument then allows to construct a monotone improvement path from $\mc S_{k-1}$ to the whole of $\mc V$.
It can be checked directly that this control set is optimal for $d=1$ and $d=2$, while is not for $d\geq 3$.

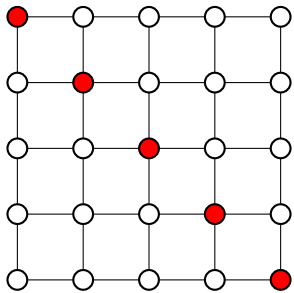
\begin{figure}
\centering
\begin{tikzpicture}[scale=.875]
\foreach \i in {0,...,4}
{
	\draw (\i,0)--(\i,4);
	\draw (0,\i)--(4,\i);
}
\foreach \i in {0,...,4}
{

	\foreach \j in {0,...,4}
	{
		\draw[white,fill] (\i,\j) circle (.15);
	}
	\draw[red,fill] (\i,4-\i) circle (.15);

	\foreach \j in {0,...,4}
	{
		\draw[ thick] (\i,\j) circle (.15);
	}
}

\end{tikzpicture}
\caption{An optimal sufficient control set for a $2$-dimensional grid.}
\label{fig-grid}
\end{figure}

\end{itemize}
\end{example}

The examples considered above show that optimal sufficient control sets for the majority game may exhibit different relative sizes depending on the considered graph. In complete graphs, their size is a constant fraction of the number $n$ of players and we expect the same to hold in very well connected graphs as for instance random Erdos-Renji graphs. This conjecture is corroborated by numerical simulations presented in Section \ref{sec:numerical-simulations}. In contrast, for more loosely connected graphs (trees, grids), the size of optimal sufficient control sets scales as a negligible fraction of the size $n$.

\subsection{Heterogeneous coordination game on the complete graph}
In this subsection, we focus on network coordination games on the complete graph, whereby $W_{ij}=1$ for every $i\ne j$ in $\mc V$. In contrast with the previous subsection, we shall allow for full heterogeneity of the players' thresholds, that in this case are given by 
$$\theta_i=\frac{n-c_i-1}{2(n-1)}\,,\qquad i\in\mc V\,.$$
Our results show that optimal sufficient control sets can be completely characterized in terms of the threshold distribution function 
$$F(z)=\frac1n\left|\left\{i\in\mc V:\,\theta_i\le z\right\}\right|\,,\qquad z\in[0,1]\,.$$

First, we have the following technical result. 

\begin{lemma}\label{lemma-complete}
Consider a heterogeneous network coordination game on the complete graph with threshold distribution $F(z)$. 
Then, $\emptyset$ is a sufficient control set if and only if 
\be\label{distr-cond}F(z)\ge z\,,\qquad \forall z\in[0,1]\,.\ee
\end{lemma}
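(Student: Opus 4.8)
The plan is to peel the statement down to a one-dimensional cascade on the complete graph and then match that cascade with the distributional inequality \eqref{distr-cond}. By Lemma~\ref{lemma:valid-sufficient}, $\emptyset$ is a sufficient control set if and only if there is a monotone improvement path from $\emptyset$ to $\mc V$, i.e.\ an order in which the players switch one at a time from $0$ to $1$, each switch being improving. On the complete graph the gain of a switch depends on the current profile only through the number of players already playing $1$: by \eqref{bestrespons-coord}--\eqref{threshold-general}, if $m$ players are active and $i$ is not among them, switching $i$ is improving exactly when $\tfrac{m}{n-1}\ge\theta_i$. Thus a monotone improvement path is simply a sequence of activations whose $k$-th player has threshold at most $\tfrac{k-1}{n-1}$.

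First I would reduce to the greedy order. Since the activation test $\tfrac{m}{n-1}\ge\theta_i$ only gets easier as $m$ grows---the monotonicity already underlying Proposition~\ref{prop:superset}---activating the admissible player of smallest threshold first never blocks a later activation. Hence, ordering the thresholds as $\theta_{(1)}\le\cdots\le\theta_{(n)}$, a monotone improvement path from $\emptyset$ to $\mc V$ exists if and only if the greedy cascade never stalls, namely
\be\label{eq:stall}
\theta_{(k)}\le\frac{k-1}{n-1}\,,\qquad k=1,\dots,n\,.
\ee
Equivalently, a stall at level $m<n$ occurs precisely when at most $m$ players have threshold at most $\tfrac{m}{n-1}$, so that no inactive player is willing to move.

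For the implication \eqref{distr-cond} $\Rightarrow$ sufficiency I would build the path explicitly and show it cannot stall. If $m\ge 1$ players are active, evaluating $F(z)\ge z$ at $z=\tfrac{m}{n-1}$ gives
$$\left|\{i:\theta_i\le\tfrac{m}{n-1}\}\right|=nF\!\left(\tfrac{m}{n-1}\right)\ge n\cdot\tfrac{m}{n-1}>m\,,$$
so strictly more than $m$ players have threshold at most $\tfrac{m}{n-1}$ and at least one of them is still inactive and may switch; in particular \eqref{eq:stall} holds. The first step $m=0$ is covered by observing that $F(z)\ge z$ for all $z>0$ forces $\theta_{(1)}=0$, so some player moves immediately. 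Iterating activates all of $\mc V$, producing the desired monotone improvement path.

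The converse is where I expect the real work to be, and I would attack it by contraposition: starting from a violation of \eqref{distr-cond}, I would exhibit a level at which the greedy cascade---and hence, by the optimality of the increasing-threshold order, every monotone improvement path---stalls, so that $\mc V$ is unreachable from $\emptyset$; the high-threshold players then form an obstruction playing the role of a maximally cohesive residual set as in Section~\ref{sec:coordination}. The hard part is precisely this translation: converting the failure of the continuous inequality $F(z)\ge z$ into a genuine stall of the integer-valued cascade \eqref{eq:stall} requires aligning the jumps of the empirical distribution $F$ with the discrete levels $\tfrac{m}{n-1}$, and treating with care the indifference case $\tfrac{m}{n-1}=\theta_i$ (where a switch is still admissible) together with the boundary behaviour at $z=0$ and $z=1$. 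Once the stalling level is pinned down, reading off the ordered thresholds contradicts \eqref{eq:stall} and closes the argument.
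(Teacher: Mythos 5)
Your forward implication (\eqref{distr-cond} implies sufficiency) is complete and correct, and it is essentially the same counting as the paper's, just run forwards rather than contrapositively: the paper takes a maximal set $\mc S$ reachable from $\emptyset$, with $n_1=|\mc S|$, and derives $F(n_1/(n-1))\le n_1/n<n_1/(n-1)$, while you evaluate $F$ at the checkpoints $m/(n-1)$ to show the greedy cascade never stalls; your treatment of the step $m=0$ via $\min_i\theta_i=0$ is also fine. The genuine gap is the converse. You state a plan (from a point $z$ with $F(z)<z$, exhibit a level at which every monotone improvement path stalls) and then explicitly declare the passage from the continuous inequality to the integer-valued stalling condition to be ``the hard part'' without carrying it out. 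Since that passage \emph{is} the entire content of the ``only if'' direction, the proposal does not prove the lemma. For comparison, the paper's route is: set $n_1=nF(z)$, argue $n_1\le zn-1$ and hence $n_1/(n-1)\le z$, so that $F(n_1/(n-1))\le F(z)=n_1/n$; seed the $n_1$ players of smallest threshold, note that every remaining player $i$ then has $\theta_i>n_1/(n-1)$ and strict best response $0$, and transfer the obstruction from this seed set to $\emptyset$ via Proposition \ref{prop:superset}.

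You should also know that the difficulty you flag is not a formality. The rounding step you would need---and the one the paper uses---namely that $n_1<zn$ with $n_1$ integer gives $n_1\le zn-1$, fails whenever $zn$ is not an integer, and the discrepancy can actually change the answer at indifference points. Take $n=2$ on the complete graph with $\theta_1=0$ and $\theta_2=0.6$: then $F(0.55)=\tfrac12<0.55$, so \eqref{distr-cond} fails, yet player $1$ is (weakly) willing to switch at the all-zero profile and player $2$ follows since $1/(n-1)=1\ge0.6$, so $\emptyset$ is a sufficient control set under the paper's weak-improvement convention. The cascade criterion you correctly identify is the discrete condition $F(m/(n-1))\ge(m+1)/n$ for $m=0,\dots,n-1$ (equivalently $\theta_{(k)}\le(k-1)/(n-1)$ for all $k$), and reconciling it with the continuous condition $F(z)\ge z$ for all $z\in[0,1]$ is exactly where your argument stops and where the statement itself needs care; any complete proof of the converse must confront this rather than defer it.
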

\begin{proof}
We start with a general consideration that will be used to prove both implications. Fix $\mc S\subseteq\mc V$ and let $\x=\1_{\mc S}$. Put $n_1=|\mc S|$. 
It follows from (\ref{bestrespons-coord}) that for every player $i$ such that $x_i=0$, it holds
$\mc B_i(\x_{-i})=\{0\}$ if only if 
\be\label{thetai>}\theta_i>\frac{n_1}{n-1}\,.\ee
%Therefore, there is no monotone improvement path from $\mc S$ to any larger subset $\mc T\supsetneq\mc S$, if and only if all players $i$ such that $x_i=0$, have a threshold $\theta_i$ satisfying (\ref{thetai>}).

Suppose $\emptyset$ is not a sufficient control set and let $\mc S\subsetneq\mc V$ be a set of maximum cardinality for which there exists a monotone improvement set from $\emptyset$ to $\mc S$. Put $x=\1_{\mc S}$ and $n_1=|\mc S|\leq n-1$. It follows from previous consideration that all players $i$ such that $x_i=0$, have a threshold $\theta_i$ satisfying (\ref{thetai>}). Therefore,
$$n-n_1\le\left|\left\{i\in\mc V:\,\theta_i>\frac{n_1}{n-1}\right\}\right|=n\left(1-F\left(\frac{n_1}{n-1}\right)\right)$$
By dividing both sides by $n$ and rearranging terms, we obtain 
\be\label{F<=}F\left(\frac{n_1}{n-1}\right)\le\frac{n_1}n<\frac{n_1}{n-1}\,.\ee
This implies that (\ref{distr-cond}) does not hold true.

Suppose instead that (\ref{distr-cond}) does not hold true and let $z$ in $ [0,1]$ be such that $F(z)<z$. By the way $F$ is defined, there exists $n_1$ in $\{0,1,\ldots,n-1\}$ such that $F(z)=n_1/n$. Observe that $n_1=nF(z)<zn$ implies that $n_1\le zn-1$ and, consequently,
$$\frac{n_1}{n-1}\le\frac{zn-1}{n-1}\le z\,.$$ 
Then, by monotonicity of  the threshold distribution function we get 
\be\label{F<=bis}F\left(\frac{n_1}{n-1}\right)\le F(z)=\frac{n_1}{n}\,.\ee
%i.e., \eqref{F<=} holds true. 
Let $\mc S$ be a set consisting of $n_1$ players with the least possible threshold and let $\x=\1_{\mc S}$. 
%Now, let $n_0=n-n_1$ and consider a strategy profile $x$ such that the $n_0$ players $i$ with the greatest threshold $\theta_i$ play action $x_i=0$ and the other $n_1$ players $j$ play $x_j=1$. 
It  then follows from \eqref{F<=bis} that each player $i$ playing $x_i=0$ has threshold satisfying \eqref{thetai>} and hence, as observed at the beginning of this proof, it is such that $\mc B_i(x_{-i})=\{0\}$. 
This implies that there cannot be a monotone improvement path from $\mc S$ to $\mc V$.  Consequently $\mc S$  is not a sufficient control set and neither is $\emptyset$ by Proposition \ref{prop:superset}. 
\end{proof}

As an application of Lemma \ref{lemma-complete} we obtain the following characterization of the optimal sufficient control sets for heterogeneous network coordination games on the complete graph.

\begin{proposition}
Consider a heterogeneous network coordination game on the complete graph with threshold distribution $F(z)$. 
Then, the minimal size of a sufficient control set is 
$$M=\left\lceil n\cdot \sup_{0\le z\le 1}\left[z-F(z)\right]_+\right\rceil\,.$$
In particular, every $\mc S$ consisting of $M$ players $i$ in $\mc V$ with the $M$ largest thresholds $\theta_i$ gives an optimal sufficient control set. 
\end{proposition}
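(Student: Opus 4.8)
The plan is to reduce everything to the fact that, on the complete graph, the best response \eqref{bestrespons-coord} of a player depends on the profile only through the number $n_1$ of players currently playing~$1$: a player $i$ with $x_i=0$ admits $1$ in its best response, and hence may switch along a monotone improvement path, exactly when $n_1/(n-1)\ge\theta_i$. Together with Lemma \ref{lemma:valid-sufficient}, this turns the question whether a set $\S$ with $|\S|=k$ is a sufficient control set into a deterministic cascade: keep the $k$ controlled players at $1$ and greedily activate the uncontrolled players in nondecreasing order of their thresholds $\theta_{(1)}\le\cdots\le\theta_{(n-k)}$. Since exactly $k+j-1$ players are at $1$ just before the $j$-th uncontrolled player is activated, this cascade reaches $\1$ if and only if $\theta_{(j)}\le(k+j-1)/(n-1)$ for every $j=1,\dots,n-k$. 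The first step I would carry out is to prove that the greedy (increasing-threshold) order is optimal, so that this condition characterizes sufficiency; this is immediate from the count-only dependence of the best responses, since raising the activation count can only enlarge the set of players who may move.

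My second step would be to show that, among all control sets of a fixed size $k$, the set $\S^\star$ of the $k$ players with the \emph{largest} thresholds is the easiest to render sufficient. Indeed the uncontrolled thresholds of $\S^\star$ are precisely the bottom $n-k$ order statistics $\theta_{(1)},\dots,\theta_{(n-k)}$ of the whole game, which are pointwise the smallest uncontrolled order statistics that any size-$k$ set can produce; hence the cascade inequalities $\theta_{(j)}\le(k+j-1)/(n-1)$ are easiest to satisfy for $\S^\star$, and if some size-$k$ set is sufficient then so is $\S^\star$. This at once gives the ``in particular'' clause (an optimal set may always be taken to consist of the $M$ largest thresholds, ties being immaterial by the count-only dynamics) and, combined with the monotonicity of Proposition \ref{prop:superset}, reduces the problem to finding the least $k$ for which $\S^\star$ is sufficient.

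Finally I would invert the cascade condition. For $j>n-k$ the inequality $\theta_{(j)}\le(k+j-1)/(n-1)$ is automatic, since its right-hand side is at least $n/(n-1)>1\ge\theta_{(j)}$; thus $\S^\star$ is sufficient if and only if $k\ge(n-1)\theta_{(j)}-(j-1)$ for all $j\in\{1,\dots,n\}$, and the minimal admissible integer $k$ is obtained by maximizing the right-hand side over $j$ and rounding up. The remaining task, which I expect to be the main obstacle, is to rewrite this order-statistic expression through the distribution $F$ exactly as in the proof of Lemma \ref{lemma-complete}: using the equivalence $\theta_{(j)}>z\iff nF(z)<j$, the quantity to be maximized becomes a (left-limit) supremum of $z-F(z)$ over $z\in[0,1]$, which is the content of the stated formula $M=\lceil n\sup_{z}[z-F(z)]_+\rceil$. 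Carrying the ceiling and the normalization through cleanly — and checking that for $k=0$ one recovers exactly condition \eqref{distr-cond} of Lemma \ref{lemma-complete}, so that $M=0$ there — is where the bookkeeping must be done with care, in particular around boundary thresholds and ties.
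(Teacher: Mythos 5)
Your route is genuinely different from the paper's. The paper reduces the problem to Lemma \ref{lemma-complete} by passing to a modified game in which the controlled players' thresholds are set to zero, so that the distribution function is shifted upward by at most $|\S|/n$, and then plays the two inequalities of \eqref{newthreshold} against $\sup_z[z-F(z)]$. You instead analyse the cascade directly through order statistics. Your first three steps are sound: on the complete graph the best response depends on the profile only through the number of active players; the increasing-threshold activation order is extremal (if any valid order exists then among its first $j$ activated players all have threshold at most $(k+j-1)/(n-1)$, so the $j$-th order statistic does too); the set of the $k$ largest thresholds minimizes every uncontrolled order statistic pointwise; hence the minimal size is the least integer $k$ with $k\ge(n-1)\theta_{(j)}-(j-1)$ for all $j$, i.e.\ $\bigl\lceil\max_j\,[(n-1)\theta_{(j)}-(j-1)]_+\bigr\rceil$. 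This part is correct and arguably more transparent than the paper's argument.

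The gap is exactly where you place it, but it is not mere bookkeeping: the identity you would need, $\max_j[(n-1)\theta_{(j)}-(j-1)]_+=n\sup_z[z-F(z)]_+$, is false. Writing $j(\theta)-1=nF(\theta^-)$ for the number of thresholds strictly below a value $\theta$, your quantity is $\max_\theta\bigl[(n-1)\theta-nF(\theta^-)\bigr]$ while the paper's is $\max_\theta\bigl[n\theta-nF(\theta^-)\bigr]$; the terms differ by $\theta$, and after the ceiling the two formulas can differ by one. Concretely, for the majority game on the complete graph with $n=3$ and all thresholds $1/2$, your cascade condition gives $M=1$: a single controlled player puts each remaining player exactly at its threshold, and the definition of improvement path permits indifferent switches. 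This agrees with the paper's own first example in Section \ref{sec:coordination} ($\lfloor n/2\rfloor$ players suffice on the complete graph), whereas the stated formula gives $\lceil 3\cdot\tfrac12\rceil=2$. So the obstacle you flag cannot be closed by careful bookkeeping: your derivation is the correct one, and the mismatch is a tie-handling defect in the target formula itself, traceable to the step ``$nF(z)<zn$ implies $nF(z)\le zn-1$'' in the proof of Lemma \ref{lemma-complete}, which fails when $zn$ is not an integer. To finish, you must either adopt a strict-improvement convention (under which the stated formula is recovered) or stop at your order-statistic expression, which is the sharp answer under the paper's definitions.
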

\begin{proof}
First observe that a subset of players $\mc S\subseteq\mc V$ is a sufficient control set for the network coordination game with utilities \eqref{utilities-coordination} if and only if $\emptyset$ is a sufficient control set for the modified network coordination game with utilities 
\be\label{modified-utilities}\ov u_i(x)=\left\{\ba{lcl}
\sum_{j\ne i}((1-x_i)(1-x_j)+x_ix_j)+(n-1)x_i &\se& i\in\mc S\\
\sum_{j\ne i}((1-x_i)(1-x_j)+x_ix_j)+c_ix_i &\se& i\in\mc V\setminus\mc S\,,\ea
\right.\ee
whereby all the players $i$ in $\mc S$ have modified threshold $\ov\theta_i=0$ and the rest of the players $j$ in $\mc V\setminus\mc S$ have the same threshold $\ov\theta_j=\theta_j$. Let $\ov F(z)$ be the threshold distribution function of this modified game and observe that 
\be\label{newthreshold}0\le\ov F(z)-F(z)\le |\mc S|/n\,,\qquad \forall z\in[0,1]\,.\ee 

We now show that any subset $\mc S\subseteq \mc V$ such that $|\mc S|<M$ can not be a sufficient control set. If $M=0$ there is nothing to prove. Assume now that $M\geq 1$ and notice that we can write 
\be\label{ceiling}n\cdot\sup_{0\le z'\le1}\left[z'-F(z')\right]_+=M-1+\epsilon\ee
for some $\epsilon >0$. Notice that, since $\sup_{0\le z'\le1}\left[z'-F(z')\right]_+>0$, we have that 
\be\label{sup}\sup_{0\le z'\le1}\left[z'-F(z')\right]_+=\sup_{0\le z'\le1}\left[z'-F(z')\right]\ee
If $|\mc S|<M$, then (\ref{newthreshold}), (\ref{ceiling}), and (\ref{sup}) imply that, for every $z$ in $ [0,1]$,
$$0\le n\left(\ov F(z)-F(z)\right)\le|\mc S|\le  M-1=n\cdot\sup_{0\le z'\le1}\left[z'-F(z')\right]-\eps\,,$$
This yields 
$$  z-\ov F(z)\ge z-F(z)-\sup_{0\le z'\le1}\left[z'-F(z')\right]+\eps/n\,,\qquad \forall z\in[0,1]\,.$$
and taking the $\sup$ on both sides, we finally obtain
$$\sup_{0\le z\le1}\left[z-\ov F(z)\right]\geq \eps/n>0$$
Then, Lemma \ref{lemma-complete} implies that $\emptyset$  is not a sufficient control set for the modified network coordination game with utilities \eqref{modified-utilities} and, consequently, $\mc S$ is not a sufficient control set for the original game. 

To complete the proof, we now consider a set $\mc S$ of $M$ players with the highest thresholds. In this case,
$$\ov F(z)=\min\{1,F(z)+M/n\}\ge\min\left\{1,F(z)+\sup_{0\le z'\le 1}\left[z'-F(z')\right]_+\right\} \ge z\,,$$ 
for every $z$ in $[0,1]$. It then follows from Lemma \ref{lemma-complete} that $\emptyset$  is a sufficient control set for the modified network coordination game with utilities \eqref{modified-utilities}, thus showing that $\mc S$ is a sufficient control set for the original game. 
\end{proof}

\section{Complexity of finding a sufficient control set}\label{sec:hard}

In this section, we study the complexity of finding sufficient control sets for arbitrary super-modular games and prove that it is an NP-complete problem \cite[Section 7.4]{Sipser:2012}.  
%Our main result shows that our problem, is equivalent to an instance of the $3$-SAT problem and thus is NP-hard. 

%For the sake of simplicity of presentation, we here stick to the special case of majority game considered in Example \ref{}.
Formally, given a binary super-modular game and a positive integer $n$ we define $SCS$ to be the logical proposition "there exists a sufficient control set of size less then or equal to $s$ for the game".

\begin{theorem}\label{theorem:NP-completeness}
The problem $SCS$ is NP-complete. 
\end{theorem}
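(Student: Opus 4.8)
To establish that $SCS$ is NP-complete, I would split the argument into the two standard parts: membership in NP, and NP-hardness via a reduction from $3$-SAT (as the paper's introduction announces).

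\textbf{Membership in NP.} The plan is to exhibit a polynomially-checkable certificate. By Lemma \ref{lemma:valid-sufficient}, a set $\mc S$ is a sufficient control set if and only if there is a \emph{monotone} improvement path from $\mc S$ to $\mc V$. By Remark \ref{remark:monotone}, such a path is completely specified by an ordering $i_1,\dots,i_m$ of the players in $\mc V\setminus\mc S$, where $m=n-|\mc S|\le n$. Hence a candidate certificate is the pair $(\mc S, (i_1,\dots,i_m))$, whose size is $O(n\log n)$. Verification requires checking, for each $h$, that player $i_h$ satisfies the best-response inequality $u_{i_h}(\x^h)\ge u_{i_h}(\x^{h-1})$ at the partial profile $\x^{h-1}=\1_{\mc S}+\sum_{j<h}\delta_{i_j}$, which is $m\le n$ evaluations of the utility functions. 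Under the standard assumption that each $u_i$ can be evaluated in polynomial time, this is a polynomial-time verifier, so $SCS\in$ NP.

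\textbf{NP-hardness: the reduction from $3$-SAT.} This is the heart of the argument and the step I expect to be the main obstacle. Given a $3$-CNF formula $\varphi$ with variables $z_1,\dots,z_N$ and clauses $C_1,\dots,C_M$, the goal is to build, in polynomial time, a super-modular game $(\mc V,\mc A,\{u_i\})$ together with an integer $s$ such that $\varphi$ is satisfiable if and only if the game admits a sufficient control set of size at most $s$. The design must simultaneously encode (i) the truth assignment of each variable, (ii) the consistency between a variable and its negation, and (iii) the requirement that every clause be satisfied, all while respecting the increasing-difference property \eqref{super-modular} and the normalization that $\0$ and $\1$ are the minimal and maximal Nash equilibria. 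A natural template is to introduce player-gadgets for literals and clauses whose threshold-type best responses \eqref{bestrespons-coord} force the propagation dynamics to mimic clause evaluation: a target set corresponds to choosing which literals are forced to $1$, and the monotone improvement path can reach $\1$ precisely when the forced literals constitute a satisfying assignment. The delicate points will be (a) engineering the utilities so that ``flipping'' a variable and its negation are mutually exclusive in any size-$s$ control set, which typically uses auxiliary players that block propagation unless exactly one of each literal pair is selected, and (b) tuning the budget $s$ so that it is tight enough to rule out cheating yet loose enough to admit the honest encoding of any satisfying assignment.

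\textbf{Completing the equivalence.} Once the gadget is in place, I would verify the two directions of the correspondence. For the forward direction, given a satisfying assignment of $\varphi$, I would designate the corresponding control set of size $s$ and exhibit an explicit monotone improvement path activating the clause-gadgets and then the remaining players, invoking the increasing-difference property to confirm each step is a valid best response. For the converse, I would show that any sufficient control set of size at most $s$ must, by the budget constraint and the blocking gadgets, select exactly one literal from each variable pair (hence induce a well-defined assignment) and must satisfy every clause (otherwise some clause-gadget player remains stuck at $0$, obstructing the path to $\1$). Throughout, I would keep checking that the constructed utilities genuinely satisfy Assumption \ref{assumption:increasing-differences}; the cleanest route is to build the game as a network coordination game so that super-modularity holds automatically by construction. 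Combining membership in NP with this reduction yields that $SCS$ is NP-complete.
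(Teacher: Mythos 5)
Your membership-in-NP argument is correct and is in fact a slightly different (and equally legitimate) route from the paper's: you certify with the pair $(\mc S,(i_1,\dots,i_m))$ and check the monotone improvement path directly via Lemma \ref{lemma:valid-sufficient}, whereas the paper's Lemma \ref{lemma:NP} uses only $\mc S$ as the witness and runs a greedy flipping procedure, invoking Proposition \ref{prop:superset} to justify that the order of flips is irrelevant. Both verifiers are polynomial; yours avoids Proposition \ref{prop:superset} at the cost of a longer certificate.

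The hardness half, however, is a plan rather than a proof, and the missing piece is precisely the content of the theorem. You correctly identify the strategy the paper uses (reduce from $3$-SAT by building a network coordination game, so that Assumption \ref{assumption:increasing-differences} holds for free), but you never exhibit the gadget, and without it neither direction of the equivalence can be verified. Concretely, the paper builds a simple graph with: a pair of adjacent literal nodes $y_i,\bar y_i$ per variable; a degree-$4$ clause node $w_j$ per clause, adjacent to its three literals and to a single hub node $z$; one pendant leaf attached to a literal node for each clause occurrence of that literal; and $m+1$ pendant leaves attached to $z$. The game is the majority game on this graph and the budget is $s$ (one slot for $z$, one per variable). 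The correctness argument then rests on degree arithmetic that your sketch does not supply: a clause node needs two of its four neighbors at $1$, hence (given $z$) at least one selected literal; a literal node $y_i$ of degree $2m_i+1$ flips once its $m_i$ clause neighbors and $\bar y_i$ are at $1$; and the leaves both inflate literal degrees and, via an exchange argument (swap a selected leaf for its unique neighbor), let one normalize any size-$s$ control set to contain $z$ and exactly one literal per pair. Your proposed mechanism of ``blocking gadgets making $y_i$ and $\bar y_i$ mutually exclusive'' is not what is needed and is not obviously implementable in a super-modular game (adding one more node to a control set can never hurt, by Proposition \ref{prop:superset}, so exclusivity must come from the cardinality budget, not from the dynamics). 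Finally, the converse direction requires an ordering observation you do not anticipate: in any monotone improvement path from a leaf-free control set, unselected literal nodes can only flip after all clause nodes have flipped, which is what forces every clause node to have a selected literal neighbor and hence yields a satisfying assignment. Until the gadget and these counting arguments are written out, the reduction --- and hence NP-hardness --- is not established.
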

In order to prove Theorem \ref{theorem:NP-completeness}, we will first show that $SCS$ belongs to the complexity class NP (c.f., \cite[Definition 7.19]{Sipser:2012}) and then that it is NP-hard.

\begin{lemma}\label{lemma:NP}
The problem $SCS$ belongs to NP.
\end{lemma}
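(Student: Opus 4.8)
The plan is to show that $SCS$ belongs to NP by exhibiting a polynomial-size certificate that can be verified in polynomial time. Recall from \cite[Definition 7.19]{Sipser:2012} that a problem is in NP if every ``yes'' instance admits a witness whose validity is checkable by a deterministic polynomial-time verifier. For an instance where the answer is affirmative, i.e.\ there exists a sufficient control set $\mc S$ with $|\mc S|\le s$, the natural certificate is the set $\mc S$ itself \emph{together with} an explicit monotone improvement path from $\mc S$ to $\mc V$. By Lemma \ref{lemma:valid-sufficient}, $\mc S$ is a sufficient control set if and only if such a monotone improvement path exists, so this is exactly the right object to certify.

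The key observation that makes this work is the conciseness guaranteed by Remark \ref{remark:monotone}: a monotone improvement path from $\mc S$ to $\mc V$ is completely specified by the ordered sequence of distinct players $i_1,\dots,i_m$ in $\mc V\setminus\mc S$ that switch from $0$ to $1$, and its length is exactly $m=|\mc V\setminus\mc S|=n-|\mc S|\le n$. Thus the certificate has size polynomial (in fact linear) in $n$. First I would take as the witness the pair consisting of $\mc S$ and this ordering $(i_1,\dots,i_m)$ of the players outside $\mc S$. The verifier then sets $\x^0=\1_{\mc S}$ and, for each $h=1,\dots,m$, forms $\x^h=\x^{h-1}+\delta_{i_h}$ and checks the single inequality
$$u_{i_h}(\x^h)-u_{i_h}(\x^{h-1})\ge 0\,,$$
which is precisely the improvement condition for a monotone step. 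It also checks that $|\mc S|\le s$, that the $i_h$ are distinct and exhaust $\mc V\setminus\mc S$, and that $\x^m=\1$.

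Each of the $m\le n$ steps requires evaluating the utility function $u_{i_h}$ at two profiles, which is a single primitive operation in the size of the game description (the utilities are part of the input), so the total verification runs in time polynomial in the input size. Hence the verifier accepts some certificate if and only if the instance is a ``yes'' instance, establishing that $SCS\in\mathrm{NP}$.

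I do not anticipate a genuine obstacle here, since the whole point of introducing monotone improvement paths in Lemma \ref{lemma:valid-sufficient} and Remark \ref{remark:monotone} was to provide a short, easily checkable characterization. The one point that deserves care is to confirm that the certificate is genuinely of polynomial size and verifiable in polynomial time \emph{relative to the encoding of the game}: since the game is specified by its utility functions $\{u_i\}$, which are already part of the input, each utility evaluation is legitimately counted as polynomial-time, and the path has at most $n$ steps. This is the only place where one must be explicit about the input model, and it is the step I would state most carefully.
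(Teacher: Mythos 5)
Your proof is correct, but it takes a slightly different route from the paper's. You take the certificate to be the pair $(\mc S,(i_1,\dots,i_m))$, i.e.\ the candidate set together with an explicit monotone improvement path, and your verifier merely replays the path and checks each improvement inequality; correctness then rests only on Lemma~\ref{lemma:valid-sufficient} and Remark~\ref{remark:monotone}. The paper instead uses the leaner certificate consisting of $\mc S$ alone: its verifier greedily flips \emph{any} player whose best response permits switching to $1$, and the nontrivial point --- that this arbitrary-order greedy procedure reaches $\mc V$ if and only if $\mc S$ is a sufficient control set, i.e.\ that it cannot get stuck prematurely when some valid ordering exists --- is supplied by Proposition~\ref{prop:superset} (supersets of sufficient control sets are sufficient). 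Your approach is the more standard ``guess-and-check'' NP argument and avoids invoking Proposition~\ref{prop:superset}; the paper's approach buys the stronger statement that, given $\mc S$ itself, sufficiency can be \emph{decided} (not merely certified) in $O((n-s)^2)$ best-response computations, a fact that is also what makes the simulation of the Markov chain in Section~\ref{sec:MC} practical. Both arguments are equally careful about the input model (counting a utility or best-response evaluation as one primitive operation on the game description), so there is no gap on that point.
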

\begin{proof}
We show that, given an instance of a finite binary-action super-modular game and a witness consisting in subset of players $\mc S\subseteq\mc V$, checking if $\mc S$ is a sufficient control set can be done in a time growing proportionally to the square of $n-s$, where $n=|\mc V|$ and $s=|\mc S|$. In fact, this can be achieved by an iterative algorithm that starts with time index $t=0$ and profile $\x(0)=\1_{\mc S}$ and then proceeds as follows. If there exists at least one player $i$ in $\mc V$ such that \be\label{eq:improve}x_i(t)=0\,,\qquad 1\in\mc B_i(x_{-i}(t))\,,\ee
then arbitrarily chose one such player $i$, increase the time index $t$ by one unit and  define the new profile $\x(t)$ with $x_{i}(t)=1$ and $x_{-i}(t)=x_{-i}(t-1)$. Otherwise, if no player $i$ satisfying  \eqref{eq:improve} exists, then halt and return the current value of the time index $t$. Since, by Proposition \ref{prop:superset}, every superset of a sufficient control set is itself a sufficient control set, we have that $\mc S$ is a sufficient control set if and only if the algorithm defined above terminates with $t=n-s$. Clearly, the number of steps of the algorithm is at most $n-s$ and at the $t$-th step, it is necessary to compute the best responses of at most $n-t$ players, so that the algorithm effectively requires at most $\sum_{t=0}^{n-s-1}(n-t)=(n-s)(n-s+1)/2$ best response computations. 
This proves that the problem belongs to the complexity class NP.  \end{proof}

We will now prove that $SCS$ is NP-hard by showing that the 3-SAT problem \cite[Ch.~7.2]{Sipser:2012} can be reduced, in polynomial time, to a particular instance of $SCS$. 
Consider any instance $I=(X,C)$ of  the 3-SAT problem, consisting of a set of variables $X = \{x_1 ,x_2 \cdots x_{s-1}\}$ and clauses $C = \{c_1,c_2,\cdots c_m\}$, such that in every clause in $C$ exactly three, possibly negated, variables from $X$ appear. Then, we associate to $I$ a simple graph  $\mc G_I=(\mc V_I,\mc E_I)$ of order $|\mc V_I|=2s+5m$ and size $|\mc E_I|=s+8m$ as follows.  
The node set $\mc V_I$ is the union of the following six disjoint sets of nodes:
\begin{itemize}
\item A set $\mc W=\{w_1,w_2,\ldots,w_m\}$, whose elements correspond each to a clause in $C$; 
\item A set $\mc Y=\{y_1,y_2,\ldots, y_{s-1}\}$, whose elements correspond each to a variable in $I$, with the interpretation that $y_i$ encodes $x_i$ if $x_i$ is true;
\item A set $\bar{\mc Y}=\{\bar y_1,\bar y_2,\ldots,\bar y_{s-1}\}$, whose elements correspond each to a variable in $I$, with the interpretation that $\bar y_i$ encodes $x_i$ if $x_i$ is false; 
%\item $\bar{y_1} \cdots \bar{y_n}$ are grouped in $\bar{Y}$, each one correspond to a variable in $I$.Each $\bar{y}_i$ encode if $x_i$ is false.
\item A single node $z$, whose role will be to break possible ties; 
\item Two sets of leaves $\mc L$ and $\mc M$, of cardinality $|\mc L|=3m$ and $|\mc M|=m+1$. %As it will become clear, the role of these sets is to ensure that the only solutions of the correct size have to include $z$ and exactly one of the pair of nodes corresponding to each variable. $L_z$ is of size $m+1$, while the neighborhood of each variable-node in the leave is exactly the size of its neighborhood in the rest (the number of clause containing it  plus one for the dual variable)
\end{itemize}
Links in $\mc E_I$ only connect pairs of nodes belonging to different sets and in particular:
\begin{enumerate}
\item[$(1)$] A node $w_j$ in $\mc W$ is connected to a node $y_i$ in $\mc Y$ if and only if the variable $x_i$ appears in the clause $c_j$ and to a node $\bar{y}_i$ in $\bar{\mc Y}$ if and only if the variable $\bar{x}_i$ appears in the clause $c_j$; 
%\item[$(1b)$] A node $w_j$ in $\mc C$ is connected to a node $\bar{y}_i$ in $\bar{\mc Y}$ if and only if the variable $\bar{x}_i$ appears in the clause $c_j$; 
\item[$(2)$] For each clause containing the variable $x_i$, node $y_i$ in $\mc Y$ is connected to a different node in $\mc L$, and for  each clause containing the variable $\ov x_i$, node $\ov y_i$ in $\ov{\mc Y}$ is connected to a different node in $\mc L$, in such a way that the elements of $\mc L$ are each connected to exactly one element  either of $\mc Y$ or of $\ov{\mc Y}$; 
%\item[2b] for in such a way that the elements of $\mc L_{\ov y}$ are each connected to exactly one element of $\ov{\mc Y}$;  for each clause containing $\bar{x}_i$, $\bar{y}_i$ is connected to a element of $L_{\bar{y}}$. the elements of $L_{\bar{y}}$ are each connected to exactly one element of $\bar{y}$
\item[$(3)$] The node $z$ is connected to every element of $\mc W$ and of  $\mc M$; 
\item[$(4)$] For every $i=1,\ldots,s-1$, node $y_i$ is connected to the corresponding node $\bar{y_i}$. 
\end{enumerate}
There is a total of $3m$ links of type $(1)$,  $3m$ links of type (2), $2m+1$ links of type (3),  and $s-1$ links of type $(4)$. Nodes in 
$\mc L$ and $\mc M$ all have degree $1$, nodes in $\mc W$ all have degree $4$, node $z$ has degree $2m+1$, while the degree of a node $y_i$ in $\mc Y$ (respectively $\ov y_i$ in $\ov{\mc Y}$) is $1$ plus twice the number of clauses the variable $x_i$ (respectively, $\ov x_i$) appears in. 

\begin{figure}\centering
\begin{tikzpicture}
\draw (140:.4)--(140:1.6);
\draw (140:1) node[above,right]{$\,(1)$};
\draw (220:.4)--(220:1.6);
\draw (220:1) node[below,right]{$\,(1)$};
\draw (0:.4)--(0:1.6);
\draw (1,0) node[above]{$(3)$};
\draw (3,0) node[above]{$(3)$};
\draw (2.4,0)--(3.6,0);

\draw (140:2)++(-.4,0)--++(-1.2,0);
\draw (140:2)++(0,-.4)--(220:2);
\draw (-1.4,0) node[left]{$(4)$};
\draw (220:2)++(-.4,0)--++(-1.2,0);
\draw (140:2)++(-1,0) node[above]{$(2)$};
\draw (220:2)++(-1,0) node[above]{$(2)$};

\draw (140:2) circle (.4);
\draw (140:2) node{$\mc Y$};

\draw[white,fill] (220:2) circle (.4);
\draw (220:2) circle (.4);
\draw (220:2) node{$\bar{\mc Y}$};
\draw (0,0) circle (.4);
\draw (0,0) node{$\mathcal W$};
\draw (0:2) circle (.4);
\draw (0:2) node{$z$};
%\draw (220:2)++(-2,0) circle (.4);
\draw (-3.4,0) ellipse (.5 and 1.5);
\draw (-3.4,0) node{$\mathcal{L}$};
%\draw (140:2)++(-2,0) circle (.4);
%\draw (140:2)++(-2,0) node{$L_y$};
\draw (0:2)++(2,0) circle (.4);
\draw (0:2)++(2,0) node{$\mathcal{M}$};
\end{tikzpicture}

\end{figure}

Now, we shall consider the majority game on the graph $\mc G_{I}$, whereby each player in $\mc V_I$ has action set $\{0,1\}$ and the utility of player $i$ is equal to the number of her neighbors that play the same action as her. We then ask the question "is there a sufficient control set of size less then or equal to $s$ for this game?"
We will now show that the answer to this question is true if and only if the instance of 3-SAT is satisfiable. 

\begin{lemma}\label{lemma:SATimpliesCONTROL}
Let $I=(X,C)$ be an instance of the 3-SAT problem, and let $\mc G_I=(\mc V_I,\mc E_I)$ be the simple graph defined above. 
If $I$ is satisfiable with a solution $x^*$ in $\{0,1\}^{s-1}$, then 
$$\mc S=\{z\}\cup\{y_i:\,x^*_i=1\}\cup\{\ov y_i:\,x^*_i= 0\}$$ 
is a sufficient control set of size $s$ for the majority game on $\mc G_{I}$. 
\end{lemma}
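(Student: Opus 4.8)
The plan is to exhibit an explicit monotone improvement path from $\mc S$ to $\mc V_I$; by Lemma~\ref{lemma:valid-sufficient} this is enough to conclude that $\mc S$ is a sufficient control set. First I would record that $|\mc S|=s$: the set contains $z$ together with exactly one node of each pair $\{y_i,\ov y_i\}$ (namely $y_i$ when $x^*_i=1$ and $\ov y_i$ when $x^*_i=0$), so $|\mc S|=1+(s-1)=s$. The whole argument then reduces to checking that, in the majority game, the remaining nodes can be switched from $0$ to $1$ one at a time, each switch being a best response, i.e.\ each activated node having at least half of its neighbours already playing $1$ at the moment it switches (recall from \eqref{bestrespons-coord} that $1\in\mc B_i$ precisely when the active fraction is $\ge\theta_i=\tfrac12$).

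I would build the path in four stages, activating the node classes in the order below. \textbf{Stage 1 (clause nodes $\mc W$).} Each $w_j$ has degree $4$, its neighbours being $z$ and the three literal nodes of $c_j$. Since $z\in\mc S$ plays $1$ and $x^*$ satisfies $c_j$, at least one literal of $c_j$ is true, and the corresponding literal node ($y_i$ when $x_i$ appears positively with $x^*_i=1$, or $\ov y_i$ when it appears negated with $x^*_i=0$) lies in $\mc S$ and is adjacent to $w_j$ via a type-$(1)$ link. Hence $w_j$ has at least $2$ of its $4$ neighbours at $1$ and may be switched. \textbf{Stage 2 (leaves $\mc M$).} Every node of $\mc M$ has its unique neighbour $z$ at $1$ and may be switched.

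\textbf{Stage 3 (complementary literal nodes).} For each $i$ I now activate the member of $\{y_i,\ov y_i\}$ not in $\mc S$. Taking for concreteness $y_i\notin\mc S$ (so $\ov y_i\in\mc S$) and writing $a_i$ for the number of clauses containing $x_i$, node $y_i$ has degree $1+2a_i$, its neighbours being $\ov y_i$ (type $(4)$), $a_i$ clause nodes (type $(1)$), and $a_i$ leaves of $\mc L$ (type $(2)$). At this point $\ov y_i=1$ and all $a_i$ clause neighbours were activated in Stage~1, so $1+a_i$ of the $1+2a_i$ neighbours play $1$; since $\tfrac{1+a_i}{1+2a_i}\ge\tfrac12$, the switch is a best response. \textbf{Stage 4 (leaves $\mc L$).} After Stage~3 every node of $\mc Y\cup\ov{\mc Y}$ plays $1$, so each leaf of $\mc L$, whose unique neighbour lies in $\mc Y\cup\ov{\mc Y}$, may be switched, reaching $\1_{\mc V_I}$.

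Concatenating the four stages yields a monotone improvement path from $\mc S$ to $\mc V_I$, which is what was required. I expect the crux to be Stage~3, where the majority condition is met only marginally: it is precisely the type-$(4)$ edge joining $y_i$ to its complement $\ov y_i\in\mc S$ that, together with the already-activated clause nodes, brings the active fraction up to (and in fact strictly above) one half. Verifying this tight inequality, and ensuring the stages are ordered so that the clause nodes of $\mc W$ are active before the complementary literals and those before the leaves of $\mc L$, is the one place where the construction must be checked with care.
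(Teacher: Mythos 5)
Your proof is correct and follows essentially the same route as the paper's: activate the clause nodes of $\mc W$ first (using $z$ plus one true literal to meet the degree-$4$ majority threshold), then the complementary literal nodes (using $\ov y_i$ plus the $a_i$ clause neighbours against degree $1+2a_i$), and finally the degree-one leaves. Your explicit verification of the tight inequalities and the ordering of stages matches the paper's argument, with only the cosmetic difference that you activate $\mc M$ earlier rather than together with $\mc L$ at the end.
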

\begin{proof} 
Since $I$ is satisfied by $x^*$, for every clause $c_j$ in $C$ there exists $i$ in $\{1,\ldots,s-1\}$ such that either $x_i$ appears in $c_j$ and $x_i^*=1$ or $\ov x_i$ appears in $c_j$ and $\ov x_i^*=1$. Thus, in the graph $\mc G_I$, all clause-related nodes in $\mc W$ have at least one neighbor in $(\mc Y\cup\ov{\mc Y})\cap\mc S$. Since they are all connected to $z$ in $\mc S$ also, and have all degree $4$ in $\mc G_I$, this implies that there exists a monotone improvement path from $\mc S$ to $\mc S\cup \mc W$.

Now, consider a variable $x_i$ in $X$ and let $m_i$ be the number of clauses it appears in. 
Then, notice that, if the corresponding node $y_i$ in $\mc Y$ does not belong to $\mc S$, it necessarily has one neighbor in $\mc S$ ($\ov y_i$) as well as $m_i$ neighbors in $\mc W$ (those corresponding to the clauses it belongs to).
Since its degree in $\mc G_I$ is exactly $2m_i+1$, this implies that  $\mc S\cup \mc W\cup\mc Y$ can be reached by a monotone improvement path  from $\mc S\cup \mc W$, hence from $\mc S$. Analogously, one proves that $\mc S\cup \mc W\cup\mc Y\cup\ov{\mc Y}$ can be reached by a monotone improvement path from $\mc S$. 

Finally, since every remaining node in $\mc L\cup\mc M$ is of degree one and connected to a node in $\mc Y\cup\ov{\mc Y}\cup\{z\}$, we get that the monotone improvement path from $\mc S$ can be extended to reach the whole node set $\mc V_I$, thus proving that $\mc S$ is a sufficient control set. 
\end{proof}

We will now show that the converse of Lemma \ref{lemma:SATimpliesCONTROL} holds true. 
\begin{lemma}\label{lemma:CONTROLimpliesSAT}
 Let $I=(X,C)$ be an instance of the 3-SAT problem, and let $\mc G_I=(\mc V_I,\mc E_I)$ be the simple graph defined above. 
%If $I$ is satisfiable with a solution $x^*\in\{T,F\}^n$, then 
%$$\mc S=\{z\}\cup\{y_i:\,x^*_i=1\}\cup\{\ov y_i:\,x^*_i=0\}$$ 
%is a sufficient control set of size $n+1$ for the majority game on $\mc G_{I}$. 
 If there is a sufficient control set $\mc S$ of size $s$ for the majority game on $\mc G_{I}$, then $I$ is solvable.
  \end{lemma}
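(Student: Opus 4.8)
The plan is to show that any sufficient control set $\mc S$ of size $s$ must have \emph{exactly} the form produced in Lemma \ref{lemma:SATimpliesCONTROL}, and then to read off a satisfying assignment from it. Throughout I work with monotone improvement paths (Lemma \ref{lemma:valid-sufficient}): since the threshold is $1/2$, a node outside the current on-set can be switched on only once at least half of its neighbours are on, and since $\mc S$ is sufficient the target is all of $\mc V_I$, so every node must eventually be switched on.

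First I would reduce to seed sets containing no leaves. Suppose a leaf $\ell\in\mc S$ (so $\ell\in\mc L\cup\mc M$) has unique neighbour $v$, which is necessarily a core node in $\mc Y\cup\ov{\mc Y}\cup\{z\}$. Put $\mc S'=(\mc S\setminus\{\ell\})\cup\{v\}$, so that $|\mc S'|\le|\mc S|$. Starting from a monotone improvement path for $\mc S$, I obtain one for $\mc S'$ by seeding $v$, deleting the step (if any) that switched $v$ on, and appending at the end the switch-on of $\ell$; every remaining activation stays valid because removing $\ell$ can only lower the on-count of $v$, which is now seeded, while adding $v$ only raises the on-counts of the other nodes. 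Iterating removes all leaves from the seed set, so I may assume $\mc S\subseteq\mc Y\cup\ov{\mc Y}\cup\mc W\cup\{z\}$.

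Next come the two counting constraints. For $z$: its $m+1$ neighbours in $\mc M$ are leaves that can switch on only after $z$, so before $z$ is switched on at most its $m$ neighbours in $\mc W$ are on; as $z$ has degree $2m+1$ and needs $m+1$ on-neighbours, $z$ can never be switched on unless $z\in\mc S$. For each variable $i$: the leaves attached to $y_i$ (resp. $\ov y_i$) are off until $y_i$ (resp. $\ov y_i$) is on, so before $y_i$ switches on its only possible on-neighbours are $\ov y_i$ and its $m_i$ clause-nodes, at most $m_i+1$ in number; meeting the threshold of the degree-$(2m_i+1)$ node $y_i$ thus forces $\ov y_i$ to be on first, and symmetrically $\ov y_i$ needs $y_i$ on first. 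Hence whichever of $y_i,\ov y_i$ is switched on first cannot reach threshold unless seeded, so $\mc S$ contains at least one of $y_i,\ov y_i$. These $s-1$ (disjoint) elements together with $z$ already account for $s$ seeds, and since $|\mc S|=s$ I conclude that $\mc S=\{z\}\cup\{\text{exactly one of }y_i,\ov y_i\text{ per variable}\}$, with no clause-node seeds. I then set $x^*_i=1$ iff $y_i\in\mc S$.

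Finally I would verify that $x^*$ satisfies every clause; this is the step needing most care, because of the circular dependence whereby a clause-node may help switch on a literal and vice versa. The circularity is broken by the same degree count as above: a \emph{non-seeded} literal $y_i$ needs all $m_i$ of its clause-neighbours on before it can switch on. Consequently, at the instant a clause-node $w_j$ is switched on, none of its non-seeded literal-neighbours is yet on; since $w_j$ has degree $4$ and needs two on-neighbours, of which $z\in\mc S$ supplies one, at least one literal-neighbour of $w_j$ must be on and hence seeded. That seeded literal is precisely a literal of $c_j$ made true by $x^*$, so $c_j$ is satisfied. As this holds for every clause, $x^*$ solves $I$. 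The main obstacle is thus this last satisfiability step, and the idea that resolves it is the directionality imposed on the activation order by the leaves hanging off each literal, which guarantees that only seeded literals can be responsible for igniting the clause-nodes.
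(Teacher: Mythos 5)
Your proof is correct and takes essentially the same route as the paper's: you normalize the seed set by swapping leaves for their unique neighbours, use the threshold/deadlock counts to force $z$ and exactly one of each pair $\{y_i,\ov y_i\}$ into the seed, and then exploit the fact that non-seeded literal nodes can only activate after their clause-neighbours to conclude that each $w_j$ must see a seeded literal. Your write-up is somewhat more explicit than the paper's (e.g., the path-surgery justifying the leaf swap and the per-clause activation-order argument), but the decomposition and key ideas coincide.
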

  \begin{proof} 
  We will first show that there exists a sufficient control set $\mc S'$ of the same size $s$ containing $z$ and exactly one node between $y_i$ and $\bar{y}_i$ for $1\le i\le s-1$. We argue as follows. First, notice that, for every $i=1,\dots , s-1$, at least one node among $y_i$, $\bar y_i$, and the leaves in $\mc L$ connected to them must be in $\mc S$ for, otherwise, it is easy to check that no improvement path would ever be able to reach the pair $\{y_i,\bar y_i\}$. Similarly, at least one element among $z$ and the leaves in $\mc M$ must be in $\mc S$.\\
In case when neither $y_i$ nor $\bar y_i$ belong to $\mc S$, removing the leaf connected to them that is in $\mc S$ and adding its sole neighbor (either $y_i$ or  $\bar y_i$) maintains the control set sufficient and preserves its size. We construct $\mc S'$ in this way replacing leaves with variable nodes and finally applying the same substitution idea to include the node $z$ removing a leaf connected to it.

Now observe that, because of the structure of the graph and since $\mc S'$ contains no leaves in $\mc L\cup\mc M$, in any monotone improvement path from $\mc S'$ to $\mc V_I$, a node in $(\mc Y\cup\ov{\mc Y})\setminus{\mc S'}$ can only appear after all nodes in $\mc W$ have already appeared. Since all nodes in $\mc W$ have degree $4$, this says that each of them must have at least two neighbors in $\mc S'$.
%, each node in $\mc W$ must have at least two neighbors in $\mc S'$. Indeed, in any monotone improvement path from $\mc S'$,  every node in $(\mc Y\cup\ov{\mc Y})\setminus\mc S$ can only be added after each of its non-leave neighbors has been added, so that only nodes in $\mc S'$ can be used to affect the nodes in $\mc W$: since these have degree $4$, they must each have at least two neighbors in $\mc S'$. 
This implies that every node in $\mc W$ must have at least one neighbor in $\mc S'\setminus\{z\}\subseteq \mc Y\cup\ov{\mc Y}$. 

Consider now the candidate solution $x^*$ in $\{0,1\}^{s-1}$ that has $x^*_i=1$ if and only if $y_i $ in $\mc  S'$. Then, it follows from the argument above that for every clause $c_j$ there exists $i$ in $\{1,\ldots,s-1\}$ such that either $x_i$ appears in $c_j$ and $x_i^*=1$ or $\ov x_i$ appears in $c_j$ and $\ov x_i^*=1$. This proves that $I$ is solvable. 
  \end{proof}

Lemma \ref{lemma:SATimpliesCONTROL} and Lemma \ref{lemma:CONTROLimpliesSAT} thus show that starting from an instance of the 3-SAT, we could build an instance of the SCS problem in polynomial time and of polynomial size, whose answer is the same as that of the 3-SAT. This shows that SCS is NP-hard. Together with Lemma \ref{lemma:NP}, this implies that SCS is an NP-complete problem.

\section{A distributed algorithm for optimal control sets}\label{sec:MC}
The characterization of sufficient control sets through the concept of monotone improvement paths (Lemma \ref{lemma:valid-sufficient}) suggests the possibility that such sets may be searched for by starting from the all-$1$ profile $\1$ and iteratively replacing $1$'s with $0$'s in the attempt to follow backwards a monotone improvement path. In  order to capture this intuition, in this section we introduce a family of discrete-time Markov chains $(Z_t^{\eps})_{t\ge0}$ on the strategy profile space $\mathcal{X}$, parameterized by a scalar $\eps$ in $[0,1]$. We will then prove that, for $0<\eps\le1$, the Markov chain $(Z_t^{\eps})_{t\ge0}$ is time-reversible and that, as $\eps$ vanishes, its stationary distribution concentrates on the family of optimal sufficient control sets. 

The dynamics of the Markov chain $Z_t^{\eps}$ are described as follows: at every discrete time $t=0,1\,\ldots$, given that $Z_t^{\eps}=\z$, a player $i$ is chosen uniformly at random from the whole player set $\mc V$. Then, if $u_i(1,\z_{-i})< u_i(0,\z_{-i})$, the state is not changed, i.e., $Z_{t+1}^{\eps}=\z$.
%If his unilateral utility with strategy $1$ is lower than with $0$, it stays still. 
Otherwise, if $u_i(1,\z_{-i})\ge u_i(0,\z_{-i})$, then if the current action of player $i$ is $z_i=1$ it is changed to $0$ with probability $1$, while if her current action is $z_i=0$, it is changed to $1$ with probability $\eps$.
% 
%  
%The only non zero non trivial transition probabilities of $Z_t^{\eps}$ are the following. Given ${\x}$ in $\mathcal{X}$, 
The transition probabilities of this Markov chain are then given by
\be\label{Pepsilon}
P^{\epsilon}_{\x,\y}=\left\{\ba{lcl}
1/n&\se& \y=\x-\delta_i\text{ and }u_i(\y)\le u_i(\x)\\
\eps/n&\se& \y=\x+\delta_i\text{ and }u_i(\y)\ge u_i(\x)\\
0&\se&\text{ otherwise }\,,
\ea\right.\ee
for every $\x,\y$ in $\mc X$.

Notice that, for $\eps=0$, only transitions from $1$ to $0$ are allowed. 
In fact, in this case, the Markov chain ${\bf Z}^0_t$ has absorbing states. Specifically, let 
\be\label{eq:Z-def}
\mc Z  = \{x\in\mathcal{X}\;|\; \mathbb{P}(\exists t_0\ge0\,:\, Z^0_{t_0}=x\;|\; {\bf Z}^0_0=\mathbbm{1})>0\}
\ee
be the set of all states that are reachable by the Markov chain $Z_t^0$ when started from ${\bf Z}^0_0=\1$
and let 
\be\label{eq:Z-infty}\mc Z_{\infty}  =  \{x\in\mathcal{X}\;|\; \mathbb{P}(\exists t_0\ge0\,:\, Z^0_{t}=x \,\forall t\geq t_0\;|\; Z^0_0=\mathbbm{1})>0\}\ee 
be the set of absorbing states reachable by $Z_t^0$ from ${\bf Z}^0_0=\1$. 
We have the following result. 

\begin{proposition}\label{prop:Z} For a finite super-modular game with binary actions $(\mc V, \mc A, \{u_i\})$, let $\mc Z$ and $\mc Z_{\infty}$ be defined as in \eqref{eq:Z-def} and \eqref{eq:Z-infty}, respectively. Then,  
\begin{enumerate}
%%\item $\N_{\text a}\subseteq \mc Z$;
\item[(i)] $\mc S\subseteq\mc V$ is a sufficient control set if and only if $\1_{\mc S}$ in $\mc Z$;
\item[(ii)]  if $\mc S$ is a minimal sufficient control set then $\1_{\mc S}$ in $\mc Z_{\infty}$. 
\end{enumerate}
\end{proposition}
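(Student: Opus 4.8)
The plan is to exploit the fact that the $\eps=0$ chain $Z^0_t$ runs monotone improvement paths \emph{backwards}. A single transition $\x\to\x-\delta_i$ is enabled precisely when $x_i=1$ and $u_i(0,\x_{-i})\le u_i(1,\x_{-i})$; since $(\x-\delta_i)_{-i}=\x_{-i}$, this is exactly the improvement inequality $u_i(\x)\ge u_i(\x-\delta_i)$ read in the forward direction. First I would make this correspondence precise: a trajectory $\1=\x^0,\x^1,\dots,\x^m=\1_{\mc S}$ of $Z^0$ of positive probability exists if and only if, reversing the order and setting $\y^h=\x^{m-h}$, the sequence $(\y^h)_{h=0,\dots,m}$ is a monotone improvement path from $\mc S$ to $\mc V$. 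Each reversed step $\y^h\to\y^h+\delta_{j_h}$ inherits the inequality $u_{j_h}(\y^{h+1})\ge u_{j_h}(\y^h)$ from the enabling condition of the corresponding backward step, and distinctness of the flipped coordinates is automatic, since along a $Z^0$ trajectory coordinates only move from $1$ to $0$ and hence each is flipped at most once, matching the requirement in Definition \ref{def:crusade}.

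For part (i), this correspondence yields both implications immediately through Lemma \ref{lemma:valid-sufficient}. If $\1_{\mc S}\in\mc Z$, reversing a positive-probability $Z^0$-path from $\1$ to $\1_{\mc S}$ produces a monotone improvement path from $\mc S$ to $\mc V$, so $\mc S$ is a sufficient control set. Conversely, if $\mc S$ is a sufficient control set, Lemma \ref{lemma:valid-sufficient} supplies such a monotone improvement path, whose reversal is a $Z^0$-trajectory starting at $\1$ and reaching $\1_{\mc S}$; since every step carries probability $1/n>0$, we conclude $\1_{\mc S}\in\mc Z$.

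For part (ii), I would first record that $\mc Z_{\infty}$ consists exactly of the reachable absorbing states of $Z^0$: a reachable state at which some $1\to0$ transition is enabled is eventually left with probability one (each revisit offers a fixed positive escape probability, so the chance of remaining forever is zero), whereas a reachable absorbing state is retained forever once hit. By part (i) we already have $\1_{\mc S}\in\mc Z$ for any sufficient control set, so it remains to show that $\1_{\mc S}$ is absorbing when $\mc S$ is minimal, in the sense that no proper subset is a sufficient control set. Now $\1_{\mc S}$ is absorbing if and only if $u_i(1,(\1_{\mc S})_{-i})<u_i(0,(\1_{\mc S})_{-i})$ for every $i\in\mc S$. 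Arguing by contradiction, suppose some $i\in\mc S$ satisfies the reverse weak inequality; then, using $(\1_{\mc S\setminus\{i\}})_{-i}=(\1_{\mc S})_{-i}$, flipping $i$ from $0$ to $1$ is a valid monotone improvement step from $\1_{\mc S\setminus\{i\}}$ to $\1_{\mc S}$, which prepended to a monotone improvement path from $\mc S$ to $\mc V$ exhibits $\mc S\setminus\{i\}$ as a sufficient control set, contradicting minimality. Hence $\1_{\mc S}$ is absorbing and reachable, i.e. $\1_{\mc S}\in\mc Z_{\infty}$.

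I expect the main obstacle to lie in part (ii): pinning down the operational meaning of $\mc Z_{\infty}$ (reconciling the ``stays forever with positive probability'' definition with ordinary absorption) and checking that the identity $(\1_{\mc S\setminus\{i\}})_{-i}=(\1_{\mc S})_{-i}$ lets the single improving flip be grafted onto an existing path without disturbing any of the improvement inequalities. The path-reversal correspondence underlying part (i) is essentially definitional once the enabling condition of $Z^0$ is rewritten in terms of $\x_{-i}$.
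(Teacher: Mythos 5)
Your proof is correct and follows essentially the same route as the paper: part (i) via the reversal correspondence between positive-probability $Z^0$-trajectories and monotone improvement paths combined with Lemma \ref{lemma:valid-sufficient}, and part (ii) by showing that a non-absorbing reachable state $\1_{\mc S}$ would yield a strictly smaller sufficient control set, contradicting minimality. Your treatment of part (ii) is in fact slightly more explicit than the paper's (which simply notes that one more $1\to0$ step reaches $\1_{\mc S'}$ with $\mc S'\subsetneq\mc S$ and invokes part (i)), but the underlying argument is the same.
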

\begin{proof}(i) By definition, $\x=\1_{\mc S}$ belongs to  the set of reachable states $\mc Z $ if and only if there exists a sequence of strategy profiles $(\y^k)_{k=0,\dots , l}$, such that 
${\y}^0=\1$,  ${\y}^l = \1_{\mc S}$, and 
\be\label{eq:yk}{\y}^{k}= {\y}^{k-1}-\delta_{i_k}\,, \qquad u_{i_k}(\y^{k})\le u_{i_k}(\y^{k-1})\qquad1\le k\le l\,.\ee
Notice that \eqref{eq:yk} is equivalent to say that the reversed path $(\x^k)_{k=0,\dots , l}$ with $\x^k=\y^{l-k}$ for $0\le k\le l$ is a monotone improvement path from 
$\mc S$ to $\mc V$. By Lemma \ref{lemma:valid-sufficient}, this is equivalent to say  that $\mc S$ is a sufficient control set.

(ii) If $\mc S$ is a minimal sufficient control set, we know from point (i) that the strategy profile $\1_{\mc S}$ belongs to the reachable set $\mc Z$. Now if, by contradiction, $\1_{\mc S}$ did not belong to the set of reachable absorbing states $\mc Z_{\infty}$, then, from $\x=\1_{\mc S}$, the Markov chain ${\bf Z}^0_t$ could reach, in one step, a different state $\x'=\1_{\mc S'}$ with $\mc S'\subsetneq \mc S$, thus contradicting the minimality assumption on $\mc S$. 
\end{proof}

Point (i) of Proposition \ref{prop:Z} implies that the problem of finding optimal sufficient control sets can be equivalently stated as the problem of finding strategy profiles $\x$ in $\mc Z$ of minimal $l_1$-norm $||\x||_1=\sum_kx_k$, i.e., that $\mc S$ is an optimal sufficient control set if and only if 
$$\1_S\in\argmin\limits_{\x\in\mc Z}||\x||_1\,.$$
Point (ii) implies that we can actually restrict the minimization above to the set $\mc Z_{\infty}$ of absorbing states of the 
Markov chain ${\bf Z}^0_t$ that are reachable from the all-1 strategy profile. However, as the example below shows, $\mc Z_{\infty}$ may contain profiles corresponding to sufficient control sets that are suboptimal and, possibly, not even minimal. 
%would not be enough, since ${\bf Z}^0_t$ would be absorbed in finite time (in fact in a time smaller that $|\mc V|$) in one state in $\mc Z_{\infty}$, but such a state may not be an optimal control set, and in fact not even a minimal control set, as the following example shows. 
\begin{example} Consider the majority game on the ring graph with four nodes $\{1,2,3,4\}$. Then, $\z^1=(1,0,1,0)$ in $\mc Z_\infty$ corresponds to the sufficient control set $\mc S=\{1,3\}$, but it is not minimal since $\{1\}$ is also a sufficient control set.
%
%
%$\z^2=(1,0,0,0)\in\mc Z_{\infty}$ is also a sufficient control set. 
\end{example}
As a consequence, 
by simply simulating the Markov chain ${\bf Z}^0_t$ started from ${\bf Z}^0_0=\1$, we are not guaranteed to reach an optimal sufficient control set. To overcome this issue, we will instead use the Markov chain $Z^{\eps}_t$ with $\eps >0$, which, as shown below, is time-reversible and ergodic on whole set $\mc Z$ of reachable strategy profiles and, hence, it does not get trapped in non-optimal control sets, and at the same time has a stationary distribution concentrating on the set of optimal control sets as the parameter $\eps$ vanishes. 
\begin{theorem}
For a finite super-modular game with binary actions $(\mc V, \mc A, \{u_i\})$, let $\mc Z$ be defined as in \eqref{eq:Z-def}. 
Then, for $\eps >0$, the Markov chain $Z^{\eps}_t$ with transition probabilities \eqref{Pepsilon}
\begin{enumerate}
\item[(i)] keeps the set $\mc Z$ invariant, namely, if $Z^{\eps}_0$ belongs to $\mc Z$, then $Z^{\eps}_t$ belongs to $\mc Z$ for every $t\ge0$;
\item[(ii)] is time-reversible and ergodic on the set $\mc Z$;
\item[(iii)] has stationary probability 
\be\label{eq:stationary}\mu^\eps _{\x} := \frac1{K_{\eps}}\eps^{ ||\x||_1}\,,\qquad \x\in\mc Z\,,\ee
where $K_{\eps}=\sum_{\x\in\mc Z}\eps^{ ||\x||_1}$. In particular,  $\mu^\eps$ converges to a probability measure $\mu$ concentrated on the set of profiles corresponding to optimal sufficient control sets as $\eps$ vanishes.
\end{enumerate}
\end{theorem}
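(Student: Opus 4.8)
The plan is to handle the three parts in sequence, using throughout the identification of $\mc Z$ with the family of sufficient control sets from Proposition~\ref{prop:Z}(i). A first observation is that the transition graph is symmetric: by \eqref{Pepsilon} a downward move $\x\to\x-\delta_i$ is admissible precisely when $u_i(\x-\delta_i)\le u_i(\x)$, which is exactly the condition making the reverse upward move admissible, so $P^\eps_{\x,\y}>0$ if and only if $P^\eps_{\y,\x}>0$. For the invariance claim (i), let $\x=\1_{\mc S}\in\mc Z$, so $\mc S$ is a sufficient control set. A downward transition $\x\to\x-\delta_i$ is an admissible step of the $\eps=0$ chain, hence $\x-\delta_i$ stays reachable from $\1$ and lies in $\mc Z$ by \eqref{eq:Z-def}. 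An upward transition $\x\to\x+\delta_i$ produces $\1_{\mc S\cup\{i\}}$ with $\mc S\cup\{i\}\supseteq\mc S$; by Proposition~\ref{prop:superset} this superset is again a sufficient control set, so $\x+\delta_i\in\mc Z$ by Proposition~\ref{prop:Z}(i). Thus no admissible move escapes $\mc Z$.

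For (ii), I would establish reversibility through detailed balance. On any edge joining $\x$ and $\y=\x+\delta_i$ both directions are admissible simultaneously, with $P^\eps_{\x,\y}=\eps/n$ and $P^\eps_{\y,\x}=1/n$; since $||\y||_1=||\x||_1+1$ one gets $\mu^\eps_\x P^\eps_{\x,\y}=\eps^{||\x||_1+1}/(nK_\eps)=\mu^\eps_\y P^\eps_{\y,\x}$, so $\mu^\eps$ is reversible and stationary. Irreducibility on $\mc Z$ follows from the symmetric edge structure: every $\x\in\mc Z$ is reachable from $\1$ by downward steps and, reversing them, $\1$ is reachable from $\x$ by upward steps, so all states communicate. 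For aperiodicity, I would exhibit a positive holding probability: if $\eps<1$, any $\x\ne\1$ in $\mc Z$ has some coordinate $i$ with $x_i=0$, and selecting $i$ leaves the state unchanged with probability at least $(1-\eps)/n>0$, so $P^\eps_{\x,\x}>0$. Irreducibility with aperiodicity yields ergodicity on $\mc Z$ for $0<\eps<1$, which is the regime relevant for the limit below.

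For (iii), reversibility already makes $\mu^\eps$ in \eqref{eq:stationary} the unique stationary distribution, with normalization encoded in $K_\eps$. To pass to the limit, set $M=\min_{\x\in\mc Z}||\x||_1$, equal by Proposition~\ref{prop:Z}(i) to the minimal cardinality of a sufficient control set, and let $\mc Z^\star=\{\x\in\mc Z:\,||\x||_1=M\}$ collect the profiles of the optimal sufficient control sets. Factoring $\eps^M$ gives
\[\mu^\eps_\x=\frac{\eps^{||\x||_1-M}}{\sum_{\y\in\mc Z}\eps^{||\y||_1-M}}\,,\]
where every exponent $||\x||_1-M\ge0$ vanishes exactly on $\mc Z^\star$; hence as $\eps\to0$ the denominator tends to $|\mc Z^\star|$ while each numerator tends to $1$ if $\x\in\mc Z^\star$ and to $0$ otherwise. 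Therefore $\mu^\eps$ converges to the uniform measure $\mu$ on $\mc Z^\star$, that is, on the profiles of the optimal sufficient control sets.

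The hard part is the structural input behind (i) and the irreducibility in (ii): one must guarantee that upward moves never leave $\mc Z$, which is exactly where monotonicity for inclusion (Proposition~\ref{prop:superset}) is indispensable, since a priori adding a $1$ could produce a profile that is not a sufficient control set. Once the state space and its symmetric edge structure are correctly pinned down, the detailed-balance verification and the $\eps\to0$ computation are routine.
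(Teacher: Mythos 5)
Your proof is correct, and parts (ii) and (iii) follow essentially the same route as the paper: the same detailed-balance identity $\eps^{\|\x\|_1}P^\eps_{\x,\y}=\eps^{\|\y\|_1}P^\eps_{\y,\x}$, irreducibility via the symmetry of admissible transitions, and the explicit $\eps\to0$ computation (which the paper leaves as ``a direct check''). The genuine divergence is in part (i), for upward moves. The paper proves that $\x+\delta_i\in\mc Z$ by path surgery: it takes the downward $Z^0$-path from $\1$ to $\x$, deletes the step at which player $i$ flipped from $1$ to $0$, and uses the increasing-difference property \eqref{super-modular} directly to verify that every subsequent step of the shortened path remains admissible. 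You instead compose Proposition~\ref{prop:Z}(i) (identifying $\mc Z$ with the family of sufficient control sets) with Proposition~\ref{prop:superset} (sufficient control sets are closed under supersets) to conclude that $\mc Z$ is upward closed, so \emph{any} addition of a $1$ stays in $\mc Z$. Since both propositions are established before the theorem, there is no circularity, and your argument is shorter and proves something slightly stronger (upward closure of $\mc Z$ under arbitrary, not merely admissible, increments); the paper's argument buys a self-contained, quantitative reachability path but essentially re-derives a special case of what Propositions~\ref{prop:superset} and~\ref{prop:Z} already encode. One further remark in your favour: you supply an aperiodicity argument (positive holding probability at any $\x\ne\1$ when $\eps<1$) that the paper omits entirely; your restriction to $\eps<1$ there is actually warranted, since at $\eps=1$ the chain can be bipartite in degenerate games (e.g.\ all players always indifferent), so ``ergodic'' in the theorem must be read as irreducibility for that boundary value.
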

\begin{proof} 
(i) Let $\x$ in $\mc Z$ be strategy profile that is reachable from the all-$1$ profile by the Markov chain $Z_t^0$ and let $\y$ in $\mc X$ be a strategy profile such that $P^{\eps}_{{\x},\y}>0$. We need to prove that $\y$ belongs to $\mc Z$. If $\y=\x-\delta_i$ for some player $i$ in $\mc V$, then it follows from \eqref{Pepsilon} that $0<P^{\eps}_{{\x},\y}=1/n$ and then $P^{0}_{{\x},\y}=1/n>0$, thus implying that the strategy profile $\y$ belongs to $\mc Z$. 

On the other hand, if $\y=\x+\delta_i$ for some player $i$ in $\mc V$,  we argue as follows. Since $\x$ in $\mc Z$ is a strategy  profile reachable by the Markov chain $Z^0_t$ from the all-1 profile, we can find a sequence of profiles $(\x^k)_{k=0,\dots , l}$ such that $\x^0=\1$ and $\x^l=\x$ and $P^{0}_{{\x^{k-1}},\x^{k}}>0$ for every $k=1, \dots , l$.  From \eqref{Pepsilon}, this is equivalent to  $\x^{k}=\x^{k-1}-\delta_{i_k}$ and $u_{i_k}(\x^k)\leq u_{i_k}(\x^{k-1})$ for some $i_k$ in $\mc V$,  for every $k=1,\dots , l$. Let $s$ in $\{1,\dots , l\}$ be such that $i_s=i$ and consider the sequence $(\z^k)_{k=0,\dots , l-1}$ such that $\z^k=\x^k$ for $k\leq s-1$ and $\z^k=\x^{k+1}+\delta_i$ for $k\geq s$. Notice that, for $k\geq s$,
\be\label{increasing}
%\z^s=\x^{s+1}+\delta_i=\x^{s}+\delta_i-\delta_{i_{s+1}}=\z^{s-1}-\delta_{i_{s+1}}, 
\z^k=\x^{k+1}+\delta_i=\x^k+\delta_i -\delta_{i_{k+1}}=\z^{k-1}-\delta_{i_{k+1}}\ee
Relation (\ref{increasing}) and the super-modularity property (\ref{super-modular}) yield 
$$u_{i_{k+1}}(\x^{k+1})\leq u_{i_{k+1}}(\x^{k})\Rightarrow u_{i_{k+1}}(\z^k)\leq u_{i_{k+1}}(\z^{k-1})$$
for every $k\geq s-1$. This implies that $P^{0}_{{\z^{k-1}},\z^{k}}>0$ for every $k=1, \dots , l-1$.  Since
$\z^{l-1}=\x^l+\delta_i=\y$, this proves that the strategy profile $\y$ belongs to $\mc Z$.

(ii) Notice that
\be\label{Peps-1}\eps^{ ||\x||_1}P^{\eps}_{{\x},\y}=\eps^{ ||\y||_1}P^{\eps}_{{\y},\x}\,,\ee
 for every two strategy profiles  $\x$ and $\y$ in $\mc X$. This implies that the Markov chain $Z^{\eps}_t$ is time-reversible with respect to the stationary distribution \eqref{eq:stationary}. 
 
 Since the transitions that have positive probability for the Markov chain $Z^{0}_t$ have also positive probability for the Markov chain $Z^{\eps}_t$, we have that all profiles in $\mc Z$ can be reached from the all-$1$ profile by the Markov chain $Z^{\eps}_t$. Moreover, Equation \eqref{Peps-1} implies that a transition probability $P^{\eps}_{\x,\y}$ is positive if and only if the reverse transition $P^{\eps}_{\y,\x}$ is positive. This implies that $\1$ is reachable from any other profile in $\mc Z$ and thus we conclude that $Z^{\eps}_t$ is ergodic on $\mc Z$. 
 
 (iii) Ergodicity and Equation \eqref{Peps-1} imply that, for every $\eps>0$, the unique stationary distribution of the Markov chain $Z^{\eps}_t$ on the set $\mc Z$ has the form \eqref{eq:stationary}. As $\eps$ vanishes, a direct check shows that the stationary distribution $\mu^{\eps}$ converges to a uniform distribution on the set $\argmin_{{\x}\in\mc Z}||\x||_1$. Using Proposition \ref{prop:Z}, the set $\argmin_{{\x}\in\mc Z}||\x||_1$ coincides with the set of optimal sufficient control sets, thus completing the proof.
\end{proof}

\section{Numerical simulations}\label{sec:numerical-simulations}

In this section, we briefly present some numerical simulations of the proposed algorithm for the case of the majority game on Erd\"os-Renyi random graphs. The Erd\"os-Renyi graph $E(n, p)$ is a random undirected graph with $n$ nodes where  undirected links between pairs of nodes are present with probability $p$ in $[0,1]$ independently  from one another. We consider the order of the graph $n$ ranging up to $70$ and two different scalings for the probability $p$. In the first case, we consider a constant $p=0.4$ independent from the graph order $n$, thus leading to quite a densely connected graph. 
In contrast, in the second case, we choose $p=4\frac{\log n}{n}$, a choice leading to a more sparse graph that nevertheless remains connected with high probability as the graph order $n$ grows large \cite[Theorem 2.8.1]{Durrett:2006}.
We run the randomized algorithm $Z^{\eps}_t$, with $\epsilon =0.3$, for a number of steps proportional to the square of the size of the graph (exactly $100 n^2$) and the control set returned is the one of minimum cardinality during the walk. For small values of $n$, an explicit comparison with the optimal solution, obtained through exhaustive search, proves the efficiency of our approach. Simulations are reported in Figure \ref{fig:erdos}. In Figure \ref{fig:compare} we have made a comparison with respect to a naive heuristics selecting the highest degree nodes. Specifically, for each value of $n$, we have considered the highest degree nodes set of the same cardinality as the one found by our algorithm and we have plotted the percentage of the graph nodes that would turn to $1$ using that specific control set. When $n$ is sufficiently large this percentage is around $30\%$ and shows how the degree is not the right property to look at in the optimization of these control sets.

\begin{figure}
\includegraphics[scale=.5]{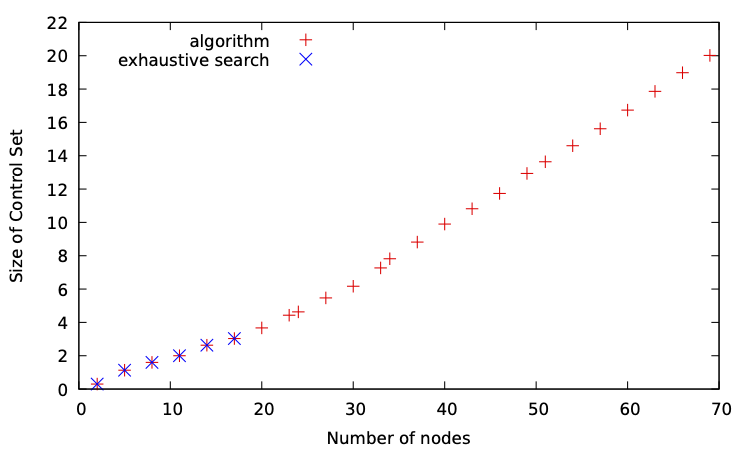} \includegraphics[scale=.5]{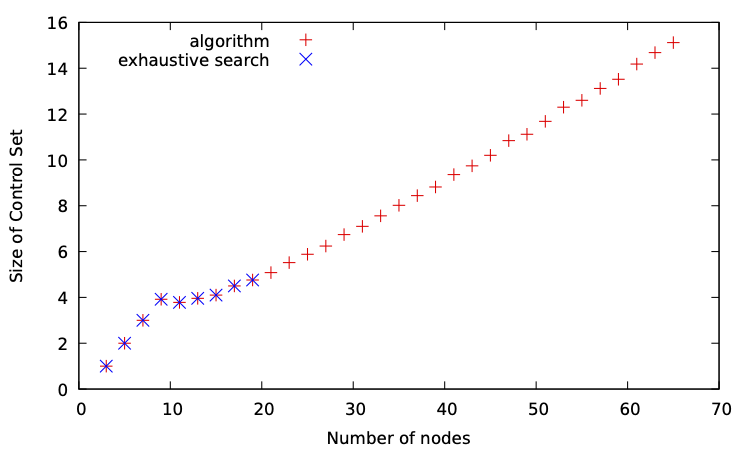}
\caption{Size of Control Sets for random graphs $E(n, p)$ with $p=0.4$ (left) and $p=4\frac{\log n}{n}$ (right) }
\label{fig:erdos}
\end{figure}
\begin{figure}
\includegraphics[scale=.5]{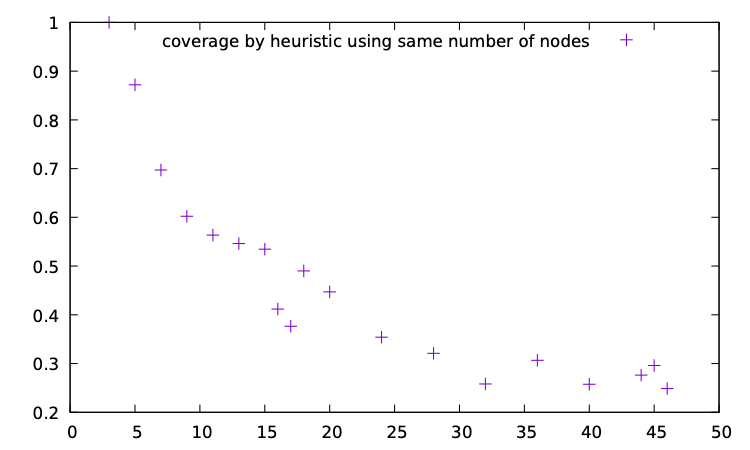} \includegraphics[scale=.5]{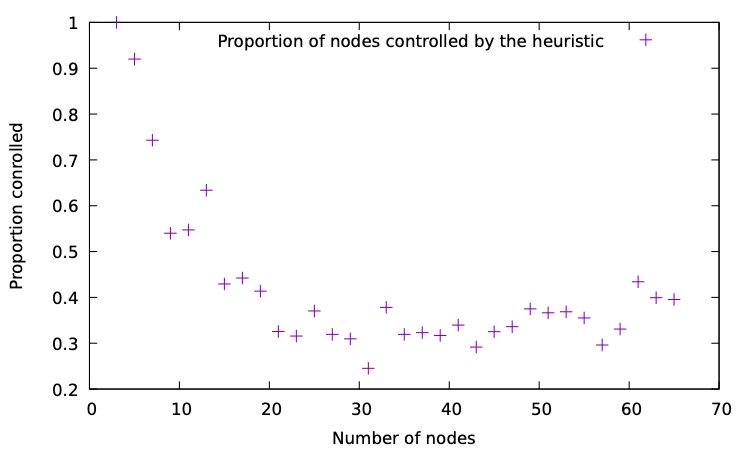}
\caption{Coverage obtained by taking the $k$ highest degree node, with $k$ the size of the set found by the algorithm for random graphs $E(n, p)$ with $p=0.4$ (left) and $p=4\frac{\log n}{n}$ (right)}
\label{fig:compare}
\end{figure}

\section{Conclusion}\label{sec:conclusions}
In this paper, we have studied a novel optimal targeting problem for super-modular games with binary action set and finitely many players. 
The considered problem consists in the selection of a subset of players of minimum size such that, when the actions of these players are forced to the value $1$, there exists a monotone improvement path from the minimal to the maximal pure strategy Nash equilibrium of the constrained super-modular game. 
 %control a set of players participating in a super-modular game in order to push their behavior from the minimal to the maximal Nash equilibrium. 
 %We have assumed the players to have the same binary action set and the control to consist in forcing action $1$ in a subset of players. 
Our main contributions consist in:  (i) showing that this is an NP-complete problem; (ii) proposing a computationally simple randomized algorithm that provably selects an optimal solution with high probability. Finally, we have presented some numerical simulations for the case of the majority game on Erd\"os-Renyi random graphs. We have compared the performance of our algorithm with that of an exhaustive search (for small problem sizes) and that of a simple heuristic where target players are those with the highest centrality in the graph. The first such comparison validates our theoretical results. The second comparison shows that the centrality-based heuristic performs as much as 70\% worse than our algorithm in this problem, thus highlighting the relevance of our analysis.

The problem studied in this paper can be considered a particular instance of a control problem in a game-theoretic framework. Our results show how the structure of the game, i.e., super-modularity, can be leveraged to get insight into the solution of the control problem. Several directions for future research can be considered. For instance, in the context of super-modular games, natural generalizations include the extension to non-binary action sets and the consideration of possibly more complex actions altering the utilities of the controlled players rather than directly forcing their action to a desired one. Our techniques strongly leverage on the super-modularity assumption. Extensions to more general classes of games are challenging and would likely require the development of different technical tools.

%\section*{References}

\bibliographystyle{plain} 
\bibliography{bib}

\end{document}